\documentclass[a4paper,onecolumn,11pt,accepted=2022-06-20]{quantumarticle}
\pdfoutput=1
\frenchspacing


\usepackage{stmaryrd}
\usepackage[english]{babel}
\usepackage[autostyle, english = american]{csquotes}
\usepackage{authblk}


\usepackage{tikz}

\usetikzlibrary{
  arrows,
  automata,
  shapes,
  shapes.geometric,
  positioning,
  calc,
  chains,
  decorations.pathmorphing,
  intersections}


\usepackage{hyperref}
\hypersetup{pdfpagemode=UseNone}


\usepackage{amssymb, amsthm, amsmath, dsfont, mathtools}
\usepackage{latexsym}
\usepackage{eepic}
\usepackage{framed}
\mathtoolsset{showonlyrefs}
\usepackage{blkarray, bigstrut}
\usepackage[title]{appendix}
\usepackage[numbers,sort&compress]{natbib}


\newtheorem{theorem}{Theorem}
\newtheorem{lemma}[theorem]{Lemma}

\newtheorem{cor}[theorem]{Corollary}

\newtheorem{fact}[theorem]{Fact}
\newtheorem{claim}[theorem]{Claim}
\theoremstyle{definition}
\newtheorem{definition}[theorem]{Definition}

\newtheorem{example}[theorem]{Example}


\newcommand{\tinyspace}{\mspace{1mu}}

\newcommand{\tr}{\operatorname{Tr}}
\newcommand{\rank}{\operatorname{rank}}
\renewcommand{\det}{\operatorname{Det}}

\renewcommand{\t}{{\scriptscriptstyle\mathsf{T}}}

\newcommand{\abs}[1]{\lvert #1 \rvert}

\newcommand{\ip}[2]{\langle #1 , #2\rangle}

\newcommand{\floor}[1]{\lfloor #1 \rfloor}

\newcommand{\norm}[1]{\lVert\tinyspace #1 \tinyspace\rVert}

\newcommand{\ket}[1]{\lvert\tinyspace #1 \tinyspace \rangle}

\renewcommand{\t}{{\scriptscriptstyle\mathsf{T}}}

\newcommand{\I}{\mathds{1}}

\newcommand{\setft}[1]{\mathrm{#1}}

\newcommand{\Pos}{\setft{Pos}}

\newcommand{\Lin}{\setft{L}}

\newcommand{\density}[1]{\setft{D}\left(#1\right)}

\newcommand{\herm}[1]{\setft{Herm}\left(#1\right)}
\newcommand{\pos}[1]{\setft{Pos}\left(#1\right)}

\newcommand{\complex}{\mathbb{C}}

\newcommand{\proj}{\mathbb{P}}

\newenvironment{namedtheorem}[1]
	       {\begin{trivlist}\item {\bf #1.}\em}{\end{trivlist}}

\newcommand\W{\mathcal{W}}

\newcommand\V{\mathcal{V}}

\renewcommand{\Im}{\setft{Im}}

\newcommand{\lip}{\langle}
\newcommand{\rip}{\rangle}
\newcommand{\braket}[2]{\lip #1 \vert #2 \rip}

\DeclareMathOperator{\spn}{\setft{span}}
\DeclareMathOperator{\Ima}{\setft{Im}}

\DeclareMathOperator{\Gr}{\setft{Gr}}
\DeclareMathOperator{\GL}{\setft{GL}}

\DeclareMathOperator{\Un}{\setft{U}}
\DeclareMathOperator{\Sp}{\setft{S}}
\newcommand\seg{\setft{Seg}}

\newcommand\sgn{\setft{sgn}}

\makeatletter
\def\Ddots{\mathinner{\mkern1mu\raise\p@
		\vbox{\kern7\p@\hbox{.}}\mkern2mu
		\raise4\p@\hbox{.}\mkern2mu\raise7\p@\hbox{.}\mkern1mu}}
\makeatother

\begin{document}
\emergencystretch 3em

\title{Entangled subspaces and generic local state discrimination with pre-shared entanglement}
\author{Benjamin Lovitz}
\email{benjamin.lovitz@gmail.com}
\affiliation{Institute for Quantum Computing and Department of Applied Mathematics, University of Waterloo, 200 University Ave W, Waterloo, ON, Canada}
\orcid{0000-0002-2445-2701}
\author{Nathaniel Johnston}
\email{njohnston@mta.ca}
\affiliation{Department of Mathematics \& Computer Science, Mount Allison University, Canada}
\affiliation{Department of Mathematics \& Statistics, University of Guelph, Canada}
\maketitle
\begin{abstract}
Walgate and Scott have determined the maximum number of generic pure quantum states that can be unambiguously discriminated by an LOCC measurement~\cite{article}. In this work, we determine this number in a more general setting in which the local parties have access to pre-shared entanglement in the form of a resource state. We find that, for an arbitrary pure resource state, this number is equal to the Krull dimension of (the closure of) the set of pure states obtainable from the resource state by SLOCC. Surprisingly, a generic resource state maximizes this number.

Local state discrimination is closely related to the topic of entangled subspaces, which we study in its own right. We introduce $r$-entangled subspaces, which naturally generalize previously studied spaces to higher multipartite entanglement. We use algebraic-geometric methods to determine the maximum dimension of an $r$-entangled subspace, and present novel explicit constructions of such spaces. We obtain similar results for symmetric and antisymmetric $r$-entangled subspaces, which correspond to entangled subspaces of bosonic and fermionic systems, respectively.
\end{abstract}

\section{Introduction}

An \textit{LOCC measurement} is a quantum measurement that can be implemented by local operations and classical communication (LOCC). We say an $n$-tuple of pure quantum states $([v_1],\dots, [v_n])$ is \textit{locally (unambiguously) discriminable} if there exists an LOCC measurement with $n+1$ outcomes $\{1,\dots, n, ?\}$ that, when performed on any $[v_a]$, outputs either $a$ or $?$, with non-zero probability to output $a$. More generally, we say that $([v_1],\dots, [v_n])$ is \textit{locally (unambiguously) discriminable with (a pure resource state)} $[w]$ if $([v_1],\dots, [v_n])$ can be locally discriminated using an LOCC measurement with pre-shared entanglement $[w]$. Our notation for pure states $[w]$ is projective, and can be thought of as $\frac{1}{\norm{w}^2}ww^*$ for a non-zero vector $w$, where $\norm{\cdot}$ denotes the Euclidean norm and $(\cdot)^*$ denotes the conjugate-transpose.

For an algebraic variety $Y,$ we say that a property holds for a~\textit{generic} element of $Y$ if there exists a Zariski-open-dense subset of $Y$ on which that property holds. Zariski-open-dense sets are massive. In particular, they are full-measure with respect to any absolutely continuous measure, e.g. the unitary Haar measure. One should intuitively think of a property holding for a generic element of $Y$ if it holds with probability one when an element of $Y$ is picked at random.

Walgate and Scott determined that, for $m$ local spaces of (affine) dimensions $d_1,\dots, d_m$, a generic $n$-tuple of ${\sum_{j=1}^m (d_j-1)+1}$ pure states is locally discriminable, and this is the largest number for which this holds \cite{article}. We extend Walgate and Scott's result to determine the number of generic pure states that can be locally discriminated with an arbitrary pure resource state $[w]$. For example, we prove that at most ${r(\sum_{j=1}^m (d_j-1)+1)}$ generic pure states are locally discriminable with the tensor-rank-$r$ GHZ state ${[\tau_{r,m}]=\left[\sum_{a=1}^r e_a^{\otimes m}\right]}$, where $e_1,\dots, e_r$ are standard basis vectors, and this bound is often achieved (Corollary~\ref{cor:GHZ}). For $m=2$, this bound cannot be achieved: precisely ${d_1d_2- (d_1-\min\{d_1,r\})(d_2-\min\{d_2,r\})}$ generic pure states are locally discriminable with $[\tau_{r,2}]$ (or any other Schmidt rank $r$ state), and this is the largest number for which this holds (Corollary~\ref{cor:bipartite}).

More generally, we characterize this number for an arbitrary pure resource state $[w]$ (Theorem~\ref{USD_characterization}). The \textit{SLOCC image} of $[w]$, denoted by $\Ima([w])$, is the set of pure states obtainable from $[w]$ by stochastic local operations and classical communication (SLOCC). Letting ${d=\dim(\overline{\Ima([w])})}$ be the Krull dimension of the Zariski closure of $\Ima([w])$, we prove that a generic $(d+1)$-tuple of pure states is locally discriminable with $[w]$. We also prove a ``strong converse:" A Zariski-open-dense set of $(d+2)$-tuples of pure states are \textit{not} locally discriminable with $[w]$. To recover Walgate and Scott's result, observe that the SLOCC image of a trivial resource state is the set of unentangled (or, \textit{product}) pure states. This set is already closed, and has dimension $d=\sum_{j=1}^m (d_j-1)$.

Our characterization extends Walgate and Scott's result in two directions beyond the obvious addition of a resource state: First, our results hold under the algebraic-geometric notion of ``generic" introduced above, which yields stronger statements than the measure-theoretic notion used by Walgate and Scott. Second, our ``strong converse," mentioned above, is much stronger than the converse statement proven by Walgate and Scott (see the discussion following Corollary~\ref{WS_generic}).

It is natural to quantify ``how useful" a given resource state $[w]$ is for LUSD in terms of the number of generic pure states that can be locally discriminated with $[w]$. We use algebraic group theory to prove that, under this barometer, a generic resource state is maximally useful for LUSD. This is quite surprising, because in many other contexts the most useful quantum states form a measure zero subset of the Hilbert space. For example, in bipartite space the maximally entangled states form a measure zero set, and in multipartite space the set of graph states (a class of states often regarded as the most useful states) has measure zero~\cite{PhysRevA.69.062311}.
As one more example, it is known that most multipartite states are, in a sense, useless for measurement-based quantum computation \cite{PhysRevLett.102.190501}.

We now sketch a proof of one direction (the ``strong converse") of our characterization of generic LUSD with a resource state. We invoke a recent observation of Bandyopadhyay et al., that an $n$-tuple of pure states $([v_1],\dots, [v_n])$ is locally discriminable with $[w]$ if and only if there exist pure states $[u_1],\dots, [u_n]\in \Ima([w])$ for which $({u_a^\t v_b \neq 0 \iff a=b})$ \cite{PhysRevA.94.022311}. In particular, ${\Ima([w]) \cap \spn\{[v_1],\dots, [v_{n-1}]\}^\perp\neq \{\}}$, because the intersection contains $[u_n]$. We also use a theorem in algebraic geometry, which states that for a quasiprojective variety $X$ of dimension $d$, a generic (projective) linear subspace of codimension $d+1$ is disjoint from $X$, and a generic linear subspace of smaller codimension intersects $X$. Letting $X=\overline{\Ima([w])}$, it follows from the algebraic geometry result that for a generic $(d+1)$-tuple of states $([v_1],\dots, [v_{d+1}])$, it holds that ${X \cap \spn\{[v_1],\dots, [v_{d+1}]\}^\perp= \{\}},$ since $\spn\{[v_1],\dots, [v_{d+1}]\}^\perp$ forms a generic projective linear subspace of codimension $d+1$.  By the result of Bandyopadhyay et al., $([v_1],\dots, [v_{d+1}],[v_{d+2}])$ is not locally discriminable with $[w]$ for any pure state $[v_{d+2}]$. This proves that a generic $(d+2)$-tuple of pure states is not locally discriminable with $[w]$. We prove the other direction, that a generic $(d+1)$-tuple of pure states is locally discriminable with $[w]$, using a similar, but more complicated argument.

\begin{center}$\ast$~$\ast$~$\ast$\end{center}

We use similar algebraic-geometric techniques to study entangled subspaces. The \textit{tensor rank} of a pure state is the minimum number $r$ for which that state can be written as a superposition of $r$ product states. It is known that the set of pure states of tensor rank at most $r$ is precisely $\Ima([\tau_{r,m}])$. We define an \textit{$r$-entangled subspace} to be a projective linear subspace that avoids $\overline{\Ima([\tau_{r,m}])}$ (this closure is known as the set of pure states of \textit{border rank} at most $r$). In bipartite space, the tensor rank is equal to the Schmidt rank, and $\Ima([\tau_{r,2}])$ is already closed. Parthasarathy determined the maximum dimension of a 1-entangled subspace \cite{Parthasarathy:2004aa}, and Bhat explicitly constructed a 1-entangled subspace of maximum dimension~\cite{BHAT_2006}. Cubitt et al. proved analogous results for $r$-entangled subspaces of bipartite space \cite{Cubitt_2008}. A theorem in algebraic geometry states that if $X$ is a projective variety, then the minimum codimension of a projective linear subspace disjoint from $X$ is $\dim(X)+1$ (and, by the algebraic geometry result of the previous paragraph, almost all subspaces of this codimension avoid $X$). It follows, from a standard upper bound on $\dim(\overline{\Ima([\tau_{r,m}])}),$ that there always exists an $r$-entangled subspace of dimension
\begin{align}
d_1 \cdots d_m - r\sum_{j=1}^m (d_j-1)-r-1,
\end{align}
whenever this quantity is non-negative. Furthermore, this is often the maximum dimension of such a subspace. 
Using this bound, we explicitly construct $2$-entangled subspaces of maximum dimension in tripartite space with local (affine) dimensions $d_1, d_2 \in \{2,3\}$ and $d_3=2$ (i.e. qubits and qutrits); and in quadripartite space with local (affine) dimensions $d_1=d_2=d_3=d_4=2$ (i.e. all qubits). To show that these subspaces are indeed 2-entangled, we solve an equivalent ideal membership problem using the Macaulay2 software package \cite{M2,entangled_subspaces_code}. While ideal membership problems are notoriously intractable in general, our positive results reveal that this may not be the case for verifying $r$-entangled subspaces (at least for small $r$).

We also define $r$-entangled subspaces of the symmetric and antisymmetric spaces, which correspond to bosonic and fermionic entangled subspaces, respectively \cite{Grabowski_2012}. We explicitly construct maximal symmetric and antisymmetric $r$-entangled subspaces of bipartite space for arbitrary $r$, and of multipartite space for $r=1$, which matches the cases of standard $r$-entangled subspace constructions presented in \cite{BHAT_2006,Cubitt_2008}.



It is known that, under various notions of entanglement, the maximum dimension of an entangled subspace is precisely the maximum number of negative eigenvalues of an entanglement witness \cite{Augusiak_2011, Johnston_2013,Johnston_2019}. The number of negative eigenvalues quantifies ``how good" the witness is at detecting entanglement. We prove that this connection between subspaces and negative eigenvalues holds under a much more general notion of witness, including multipartite $r$-entanglement witnesses.

There are other types of entangled subspaces that have been studied in previous works: \textit{non-positive partial transpose subspaces}, for which every mixed state supported on that subspace has non-positive partial transpose \cite{Johnston_2013, Johnston_2019}; \textit{genuinely entangled subspaces}, for which every element is non-product with respect to every bipartition \cite{Cubitt_2008,PhysRevA.98.012313}; and subspaces of bipartite space with high entropy of entanglement \cite{Hayden_2006}. Entangled subspaces are connected to unextendible product bases, and have found applications, for example, in quantum error correction \cite{PhysRevA.76.042309,PhysRevA.69.052330} and quantum tomography \cite{Heinosaari_2013}.
\begin{center}$\ast$~$\ast$~$\ast$\end{center}

This work is organized as follows. In Section~\ref{mp} we review some mathematical preliminaries for this work, including projective notation; the Segre, Veronese, and Grassmannian varieties; relevant results in algebraic geometry; the SLOCC image; and LUSD. In Section~\ref{sec:genericLUSD} we state and prove our characterization of generic LUSD with a resource state. In Section~\ref{sec:entangled_subspaces} we present explicit constructions of $r$-entangled subspaces of maximum dimension. In Section~\ref{sec:witnesses} we use our results to study entanglement witnesses. In Appendix~\ref{app:proofs} we prove several facts introduced in Section~\ref{mp}. In Appendix~\ref{app:dim} we write down the maximum dimensions of entangled subspaces, invoking the algebraic geometry result mentioned above and known results on the dimensions of secant varieties. In Appendix~\ref{app:entangled} we prove that a certain linear subspace introduced in Section~\ref{sec:entangled_subspaces} is entangled.

\section*{Acknowledgments}
We thank William Slofstra for his help on numerous occasions with the geometric arguments in this work. We thank Debbie Leung, Micha\l{} Oszmaniec and John Watrous for helpful discussions. B.L. was supported by the University of Waterloo and the Government of Ontario through an Ontario Graduate Scholarship. N.J. was supported by NSERC Discovery Grant number RGPIN-2016-04003.

\section{Preliminaries}\label{mp}
In this section, we review several necessary preliminaries for this work. In this preamble, we review some basic objects in affine and projective space, including the crucial observation that pure states can be identified with elements of projective space. In Section~\ref{sec:notation_qi_remarks} we introduce our notation, and discuss why it might be slightly unfamiliar to those with a quantum information background. In Section~\ref{sec:secant} we review the Segre, Veronese, and Grassmannian varieties, which correspond to the sets of unentangled pure states in standard, symmetric, and antisymmetric space, respectively. We also define the $r$-th secants to these varieties, which we will use to define $r$-entangled subspaces in these three settings. In Section~\ref{sec:n-1plane} we describe a useful alternative interpretation of the Grassmannian as the set of projective planes of a fixed dimension, and review some necessary preliminaries for our characterization of generic LUSD with a resource state. In Section~\ref{sec:alg_geo_result} we review a theorem in algebraic geometry on the dimension of a linear subspace disjoint from a variety. In Section~\ref{sec:LOCC} we review the SLOCC image. In Section~\ref{sec:LUSD} we review LUSD.

\subsection{Notation and Terminology}\label{sec:notation_qi_remarks}

For a non-zero vector space $\V$ (which we always take to be over $\complex$), we use $\braket{\cdot}{\cdot}: \V \times \V \rightarrow \complex$ to denote the standard Hermitian inner product given by ${\braket{u}{v}=u^* v}$, where $(\cdot)^*$ denotes the conjugate-transpose. We use $\Sp(\V) \subseteq \V$ to denote the unit sphere of $\V$ with respect to the norm induced by the inner product $\braket{\cdot}{\cdot}$.

We use $\proj (\V)$ (or more briefly, $\proj \V$) to denote the set of $1$-dimensional linear subspaces of $\V$, and we write ${[v] \in \proj \V}$ for the span of a non-zero vector $v \in \V$. While this projective notation is perhaps a bit non-standard in the field of quantum information theory, it is particularly important for us here: our work relies \emph{heavily} on results from algebraic geometry, where this notation is standard. Readers who are unfamiliar with projective geometry can still interpret our results and proofs as long as they keep a few points in mind:

\begin{itemize}
	\item In quantum physics, a ``pure state'' of $\V$ is typically represented by a unit vector $\ket{v} \in \Sp(\V)$, with the understanding that $\ket{v}$ and $\ket{w}$ represent the same state if there exists $\theta \in \mathbb{R}$ such that $\ket{w} = e^{i\theta}\ket{v}$. In this paper, we instead say that the pure states of $\V$ are represented by the $1$-dimensional linear subspaces $[v] \in \proj \V$, which is equivalent via the bijection which identifies a 1-dimensional linear subspace with some representative unit vector in that subspace.
	
	Identifying pure quantum states in this way as subspaces, rather than as unit vectors, is useful since it allows us to regard certain sets of pure quantum states as algebraic varieties, and study them using the machinery of algebraic geometry. However, not much will be lost by a reader making the mental substitution $[v] \rightarrow \ket{v}$ when reading the main results of this paper.
	
	For brevity, we will refer to pure quantum states simply as ``states,'' and we will refer to mixed quantum states (defined and treated only in Section~\ref{sec:witnesses}) as ``mixed states.''
	
	\item Instead of considering a linear subspace of vectors $Z \subseteq \V$, we typically consider the corresponding \emph{projective} linear subspace $\check{Z} \subseteq \proj \V$ defined by
	\begin{align}\label{eq:projectivize}
		\check{Z} = \big\{ [z] \in \proj \V : z \in Z \setminus \{0\} \big\}.
	\end{align}
	
	By convention, a projective linear subspace has dimension one less than the corresponding non-projective subspace (after all, we regard a $1$-dimensional line $\spn\{v\} \subseteq \V$ as a single point $[v] \in \proj \V$). For this reason, a \textit{projective $(n-1)$-plane} (or alternatively, a \textit{projective linear subspace of dimension $n-1$}) in $\proj \V$ is the projectivization of an $n$-dimensional linear subspace of $\V$. Similarly, we use the shorthand $\proj^{D}=\proj(\complex^{D+1})$, with the understanding that this is a projective $D$-dimensional space.
	
	As a result of considering projective subspaces instead of non-projective ones, the numbers that we present as the maximum dimension of a (projective) entangled subspace will be one less than the numbers presented in previous works (e.g., \cite{Parthasarathy:2004aa,Cubitt_2008}). 
\end{itemize}

We can also projectivize a set $Z \subseteq \V$ even if it is just an affine cone (i.e., closed under multiplication by complex scalars), rather than a linear subspace. In this case, the projectivization $\check{Z}$ is still given by Equation~\eqref{eq:projectivize}, but we note that it will no longer necessarily be a (projective) subspace---it will just be a set, potentially with no additional structure. Conversely, given a subset $X \subseteq \proj \V$, we define $\hat{X} \subseteq \V$ to be the affine cone over $X$:
\[
	\hat{X}=\big\{x \in \V : [x] \in X\big\}\cup \{0\}.
\]

For vector spaces $\W$ and $\V$, we use $\Lin(\W,\V)$ to denote the space of linear maps from $\W$ to $\V$, and let $\Lin(\V)=\Lin(\V,\V)$. If $m$ is a positive integer and we let $[m]=\{1,\dots, m\}$, then every vector space $\V=\bigotimes_{j=1}^m \V_j$ is canonically isomorphic to $\Lin(\bigotimes_{j \in S} \V_j^*, \bigotimes_{j \in [m]\setminus S} \V_j)$ for any subset $S \subseteq [m]$. The \textit{flattening rank} of a state $[v] \in \proj\V$ is the maximum, taken over all subsets $S \subseteq [m]$ of size $1 \leq \abs{S} \leq m-1$, of the matrix rank of $[v]$ as an element of (the projectivization of) $\Lin(\bigotimes_{j \in S} \V_j^*, \bigotimes_{j \in [m]\setminus S} \V_j)$.

Let $\Pos(\V)\subseteq \Lin(\V)$ be the set of positive semidefinite operators on $\V$, let $\Un(\V)\subseteq \Lin(\V)$ be the set of unitary operators on $\V$, and let $\GL(\V)\subseteq \Lin(\V)$ be the set of invertible operators on $\V$. Let $\{e_1,\dots, e_d\}$ be the standard basis of $\complex^d$.

\subsection{The Segre, Veronese, and Grassmannian varieties (and their secants)}\label{sec:secant}
We will be particularly interested in the algebraic varieties known as the Segre, Veronese, and Grassmannian varieties, as these correspond to the sets of unentangled states in a space of distinguishable, bosonic, and fermionic particles, respectively \cite{Grabowski_2012}. In this section, we briefly describe these varieties, as well as their $r$-th secants, which we will use to study $r$-entangled subspaces. We refer the reader to \cite{10.1112/S0024610706022630,landsberg2012tensors,harris2013algebraic,Bernardi_2018} for more in-depth treatments of these objects.

The Segre variety
\begin{align}\label{seg}
Y=\setft{Seg}(\proj^{d_1-1}\times \dots\times \proj^{d_m-1})\subseteq \textstyle{\proj (\bigotimes_{j=1}^m \complex^{d_j})}
\end{align}
is the image of the Segre embedding, and is equal to the set of states of the form ${[x_1 \otimes \dots \otimes x_m]}$, where $x_j \in \complex^{d_j}$ for each $j \in [m]$. The Segre variety corresponds to the unentangled (or, \textit{product}) states in the space $\proj(\bigotimes_{j=1}^m \complex^{d_j})$ of distinguishable particles.

For each permutation $\sigma \in S_m$, let $P_{\sigma} \in \Lin(\bigotimes^m \complex^d)$ be the linear map defined on product vectors as
\begin{align}
P_{\sigma}(x_1 \otimes\dots\otimes x_m)=x_{{\sigma^{-1}(1)}}\otimes \dots \otimes x_{{\sigma^{-1}(m)}},
\end{align}
and extended linearly. Let $\proj(\bigvee^m \complex^d)\subseteq \proj(\bigotimes^m \complex^d)$ denote the symmetric subspace, i.e, the set of states ${[x]\in \proj (\bigotimes^m \complex^d)}$ such that $P_{\sigma} x=x$ for all $\sigma \in S_m$. This space is spanned by states of the form $[x_1 \vee \dots \vee x_m]$, where
\begin{align}
[x_1 \vee \dots \vee x_m]=\left[ \sum_{\sigma \in S_m} x_{\sigma(1)}\otimes \dots \otimes x_{\sigma(m)}\right].
\end{align}
In quantum physics, $\proj(\bigvee^m \complex^d)$ represents a bosonic space of indistinguishable particles. (This space can also be thought of as the set of homogeneous polynomials of degree $m$ in $d$ variables.) The Veronese variety
\begin{align}
\nu_m(\proj^{d-1})\subseteq \textstyle{\proj({\bigvee}^m \complex^d)}
\end{align}
is the image of the $m$-th Veronese embedding, and is equal to the set of states in $\proj(\bigvee^m \complex^d)$ of the form $[x^{\vee m}]$. The Veronese variety corresponds to the set of unentangled states in the bosonic space $\proj({\bigvee}^m \complex^d)$.
%

Let $\proj(\bigwedge^m \complex^d)\subseteq \proj(\bigotimes^m \complex^d)$ denote the antisymmetric subspace, i.e. the set of states $[x]\in \proj (\bigotimes^m \complex^d)$ such that $P_{\sigma} x= (-1)^{\setft{sgn} (\sigma)} x$ for all $\sigma \in S_m$. This space is spanned by the set of states of the form $[x_1 \wedge \dots \wedge x_m]$, where
\begin{align}
[x_1 \wedge \dots \wedge x_m]=\left[ \sum_{\sigma \in S_m} (-1)^{\sgn(\sigma)} x_{\sigma(1)}\otimes \dots \otimes x_{\sigma(m)}\right].
\end{align}
In quantum physics, the antisymmetric subspace represents a fermionic space of indistinguishable particles. The Grassmannian variety
\begin{align}
\Gr(m-1,\proj^{d-1}) \subseteq \textstyle{\proj({\bigwedge}^m \complex^d)}
\end{align}
is the set of states in $\proj(\bigwedge^m \complex^d)$ of the form $[x_1 \wedge \dots \wedge x_m]$. The Grassmannian variety corresponds to the set of unentangled states in the fermionic space $\proj({\bigwedge}^m \complex^d)$. The Grassmannian can also be viewed as the variety of projective $(m-1)$-planes in $\proj^{d-1}$. We expound on this persective in Section~\ref{sec:n-1plane}, as we will make frequent use of it.

Now we introduce secant varieties. For a projective variety $X \subseteq \proj^D$, let
\begin{align}\label{sigma_r}
\sigma_r(X)=\overline{\bigcup_{[x_1],\dots, [x_r] \in X} \spn\{[x_1],\dots, [x_r]\}} \subseteq \proj^D
\end{align}
be the \textit{r-th secant variety} to $X$, where the closure can equivalently be taken with respect to either the Zariski or Euclidean topology. 
It is a standard result that
\begin{align}
\dim(\sigma_r(X)) \leq \min\{D, r\dim(X)+r-1\}.
\end{align}
If equality holds in this expression, then $\sigma_r(X)$ is said to have the \textit{expected dimension}, and otherwise it is said to be \textit{defective}.

For the Segre variety $Y\subseteq \proj(\bigotimes_{j=1}^m \complex^{d_j})$, we have $\dim(Y)=\sum_{j=1}^m (d_j-1)$ and $\dim(\proj(\bigotimes_{j=1}^m \complex^{d_j}))=d_1\cdots d_m-1$, so
\begin{align}\label{eq:dim_segre_secant}
\dim(\sigma_r(Y))\leq \min\bigg\{d_1\cdots d_m-1, r \sum_{j=1}^m (d_j-1)+r-1\bigg\}.
\end{align}
The \textit{border rank} of a state $[v] \in \proj(\bigotimes_{j=1}^m \complex^{d_j})$ is the smallest positive integer $r$ for which $[v] \in \sigma_r(Y)$. The \textit{tensor rank} of $[v]$ is the smallest $r$ for which $[v]$ is in the span of $r$ elements of $Y$ (i.e. $[v]$ is contained in the pre-closure of the set defined in~\eqref{sigma_r}, with $X=Y$). In the case of a bipartite quantum system (i.e., when $m = 2$), the border rank and tensor rank coincide, and are often referred to as the \textit{Schmidt rank}.

 A conjecturally complete set of defective $\sigma_r(Y)$ have been proposed in \cite{Abo_2008}, which are nicely summarized in \cite[Conjecture 6]{Bernardi_2018}. For example,
\begin{align}\label{eq:bipartite_sr_dim}
\dim(\sigma_r( \seg(\proj^{d_1-1}\times \proj^{d_2-1})))=d_1 d_2-(d_1-\min\{d_1,r\})(d_2-\min\{d_2,r\})-1,
\end{align}
so $\sigma_r( \seg(\proj^{d_1-1}\times \proj^{d_2-1}))$ is defective in many cases. Under the identification ${\complex^{d_1}\otimes \complex^{d_2} \cong \Lin((\complex^{d_1})^*,\complex^{d_2})}$, the variety $\sigma_r( \seg(\proj^{d_1-1}\times \proj^{d_2-1}))$ corresponds to the set of (projective) $d_2 \times d_1$ matrices of rank at most $r$.

For the Veronese variety $\nu_m(\proj^{d-1})\subseteq \proj({\bigvee}^m(\complex^d))$, we have $\dim(\nu_m(\proj^{d-1}))=d-1$ and $\dim(\proj({\bigvee}^m(\complex^d)))=\binom{d+m-1}{m-1}-1$, so
\begin{align}
\dim(\sigma_r(\nu_m(\proj^{d-1})))\leq \min\left\{\binom{d-1+m}{m}-1,rd-1\right\}.
\end{align}
A complete set of defective Veronese secants are known \cite{alexander1995polynomial}; see also Theorem 2 in \cite{Bernardi_2018}. In particular, $\sigma_r(\nu_2(\proj^{d-1}))$ is defective whenever $2 \leq r \leq d-1$, in which case
\begin{align}
\dim(\sigma_r(\nu_2(\proj^{d-1})))=\min\left\{\binom{d+1}{2}-1,rd-\binom{r}{2}-1\right\}.
\end{align}
Note that there is a typo in the expression of this dimension in Theorem 2 of \cite{Bernardi_2018}. Under the identification $\complex^{d}\otimes \complex^{d} \cong \Lin((\complex^{d})^*,\complex^{d})$, the variety $\sigma_r(\nu_2(\proj^{d-1}))$ corresponds to the set of (projective) symmetric $d \times d$ matrices of rank at most $r$.

For the Grassmannian variety $\Gr(m-1,\proj^{d-1}) \subseteq \proj({\bigwedge}^m(\complex^d))$, we have
\begin{align}
{\dim(\Gr(m-1,\proj^{d-1}))=m(d-m)},
\end{align}
and $\dim(\proj({\bigwedge}^m(\complex^d)))=\binom{d}{m}-1$, so
\begin{align}
\dim(\sigma_r(\Gr(m-1,\proj^{d-1})))\leq \min\left\{\binom{d}{m}-1, r m(d-m)+r-1\right\}.
\end{align}
As with the Segre variety, there are a conjecturally complete set of defective Grassmannian secants; see~\cite[Conjecture~7]{Bernardi_2018} and \cite{doi:10.1080/10586458.2007.10128997,boralevi2013note,bernardi2018new}. Similarly to the Segre and Veronese varieties, $\sigma_r(\Gr(1,\proj^{d-1}))$ is defective whenever $2 \leq r < \floor{\frac{d}{2}}$, in which case
\begin{align}
\dim(\sigma_r(\Gr(1,\proj^{d-1})))=\binom{d}{2}-\binom{d-2r}{2}-1=2r(d-r)-r-1.
\end{align}
Under the identification $\complex^{d}\otimes \complex^{d} \cong \Lin((\complex^{d})^*,\complex^{d})$, the variety $\sigma_r(\Gr(1,\proj^{d-1}))$ corresponds to the set of projective antisymmetric $d\times d$ matrices of rank at most $2r$.

\subsection{The set of projective $(n-1)$-planes as a projective variety}\label{sec:n-1plane}

In this subsection, we recall the canonical bijection between the Grassmannian variety and the set of projective planes of a fixed dimension, which endows the latter set with the structure of a projective variety. We then review several facts about the Grassmannian that we will use to prove our characterization of generic LUSD with a resource state (Theorem~\ref{USD_characterization}).

Recall the canonical bijection between the set of projective $(n-1)$-planes and ${\Gr(n-1,\proj \V)}$, which identifies $[v_1 \wedge \dots \wedge v_n]$ with $\spn\{[v_1],\dots, [v_n]\}$ for any linearly independent set $\{[v_1],\dots, [v_n]\}\subseteq \proj \V$. This identification endows the set of projective $n-1$ planes with the structure of a projective variety.
In a slight abuse of notation, we will use ${\Gr(n-1,\proj \V)}$ to refer to both the projective variety of decomposable elements of $\proj({\bigwedge}^n \V)$ and the projective variety of projective $(n-1)$-planes in $\proj \V$.

It is important to keep in mind the canonical bijection between the set of projective $(n-1)$-planes in $\proj \V$ (namely, $\Gr(n-1, \proj \V)$), and the set of $n$-dimensional linear subspaces of $\V$ (typically denoted $\Gr(n, \V)$). While $\Gr(n, \V)$ is perhaps more standard, we prefer $\Gr(n-1, \proj \V)$, as we would like the elements of each subspace to be states. Note that we have replaced the symbol $m$ with the symbol $n$ in this section, to match later notation in which the Grassmannian is viewed as a space of projective $(n-1)$-planes.

In accordance with our definition of \textit{generic}, we say that a property holds for a {generic} projective $(n-1)$-plane if there exists a Zariski-open-dense subset of $\setft{Gr}(n-1,\proj \V)$ on which the property holds. We say a property holds for a {generic} $n$-tuple of states in $\proj \V$ (or when $n=1$, simply a generic state in $\proj \V$) if there exists a Zariski-open-dense subset $U \subseteq \proj(V)^{\times n}$ such that the property holds for every $([v_1], \dots, [v_n]) \in U$. (Here, $\proj(V)^{\times n}$ is viewed as a projective variety via the Segre embedding.)

In the remainder of this subsection, we review two facts that we will use to prove our characterization of generic LUSD with a resource state. First, a generic $n$-tuple of states spans a generic projective $(n-1)$-plane, and vice versa (Fact~\ref{fact:generic}). Second, the bijection ${\Gr(n-1,\proj^d)\cong \Gr(d-n,\proj^d)}$, which sends a subspace to its orthogonal complement, defines an isomorphism of projective varieties (Fact~\ref{fact:ortho}). We defer the proofs of these facts to Appendix~\ref{app:proofs}.

\begin{fact}\label{fact:generic}
A generic $n$-tuple of states spans a generic projective $(n-1)$-plane, and vice versa. In more details, let $n$ be a positive integer, let $\V$ be a $\complex$-vector space, and let
\begin{align}\label{pi_tilde}
\tilde{\pi}: \proj(\V)^{\times n} \dashrightarrow \setft{Gr}(n-1,\proj \V)
\end{align}
be the rational map defined by $\tilde{\pi}([v_1],\dots,[v_n])=[v_1 \wedge \dots \wedge v_n]$. Then a subset ${U \subseteq \setft{Gr}(n-1,\proj \V)}$ is open-dense if and only if $\tilde{\pi}^{-1}(U) \subseteq \proj(\V)^{\times n}$ is open-dense.

\end{fact}
\begin{fact}\label{fact:ortho}
The bijection $\Gr(n-1,\proj^d)\cong \Gr(d-n,\proj^d)$, which sends a subspace to its orthogonal complement with respect to some non-degenerate bilinear form $\ip{\cdot}{\cdot}$, defines an isomorphism of projective varieties.
\end{fact}
\subsection{Projective linear subspaces disjoint from a variety}\label{sec:alg_geo_result}
The following algebraic-geometric result will be instrumental in proving our characterization of generic LUSD with a resource state. We will also use it to 
determine the maximum dimension of an entangled subspace.


\begin{theorem}\label{dimension_equivalence}
Let $X \subseteq \proj^D$ be a quasiprojective variety. Then
\begin{align}
\dim(X)=&\text{The smallest non-negative integer $d$ for which a generic projective $D-d-1$}\\
&\quad \text{plane is disjoint from $X$.}\\
=&\text{The largest non-negative integer $d$ for which a generic projective $D-d$}\\
&\quad \text{plane intersects $X$.}
\end{align}
If $X$ is projective, then
\begin{align}
\dim(X)=&\text{The largest non-negative integer $d$ for which every projective $D-d$ plane}\\
&\quad \text{intersects $X$.}
\end{align}
\end{theorem}

This characterization is taken to be the definition of the dimension of an irreducible variety in~\cite[Definition 11.2]{harris2013algebraic}, and is shown to be equivalent to other standard notions of dimension, e.g. the Krull dimension. The above extension to the reducible case is straightforward.
\subsection{The SLOCC image}\label{sec:LOCC}
\textit{LOCC channels} are quantum channels (completely positive, trace preserving maps) that can be implemented by local operations and classical communication (LOCC). \textit{SLOCC maps} are completely positive, trace non-increasing maps that can be implemented with non-zero probability by an LOCC channel. In other words, SLOCC maps represent LOCC channels with postselection \cite{watrous_2018}.

Let $\V=\bigotimes_{j=1}^m \complex^{d_j}$ and $\W=\bigotimes_{j=1}^{m} \complex^{c_j}$ be vector spaces. For a state $[w] \in \proj{\W}$, we define the \textit{SLOCC image} of $[w]$ in $\proj{\V}$, denoted $\Ima_{\proj{\V}}([w]) \subseteq \proj{\V}$, to be the set of states in $\proj{\V}$ obtainable from $[w]$ by SLOCC. This set is also known as the \textit{downward closure of $[w]$ with respect to SLOCC}, and can alternatively be characterized as the set of pure states that have a Tucker decomposition with core state $[w]$ \cite{Tucker:1966aa,rabanser2017introduction}. A related notion is the \textit{SLOCC orbit} of a state $[v]\in \proj \V$, denoted $\mathcal{O}_{[v]}\subseteq \proj{\V}$, which is the set of states in $\Ima_{\proj{\V}}([v])$ that can be converted back to $[v]$ by SLOCC, i.e.
\begin{align}
\mathcal{O}_{[v]} = \{[u] \in \Ima_{\proj{\V}}([v]) : [v] \in \Ima_{\proj{\V}}([u])\}.
\end{align}

In this subsection, we observe several properties of $\Ima_{\proj{\V}}([w])$ and $\mathcal{O}_{[v]}$ that we will use in Section~\ref{sec:genericLUSD}. We can describe these sets mathematically as
\begin{align}
\Ima_{\proj{\V}}([w])=\{[(A_1 \otimes \dots \otimes A_m)w] : A_i \in \Lin(\complex^{c_i},\complex^{d_i}) \quad \!\text{for all} &\quad \! i\in[m]\\
&\text{and}\quad \! (A_1 \otimes\dots \otimes A_m) w \neq 0\},
\end{align}
and
\begin{align}
\mathcal{O}_{[v]}=\{[(A_1 \otimes\dots \otimes A_m) v] : A_i \in \setft{GL}(\complex^{d_i})\quad\text{for all}\quad  i \in [m] \},
\end{align}
see \cite{PhysRevA.62.062314}. True to its name, $\mathcal{O}_{[v]}$ is the orbit of $[v]$ under the standard action of the product (projective) general linear group.

If $\V=\complex^{d_1}\otimes\complex^{d_2}$ and $\W=\complex^{c_1}\otimes \complex^{c_2}$ are bipartite spaces, then $\Ima_{\proj \V}([w])$ is the set of states in $\proj \V$ of Schmidt rank less than or equal to the Schmidt rank of $[w]$, and $\mathcal{O}_{[v]}$ is the set of states of Schmidt rank equal to the Schmidt rank of $[v]$. In multipartite space, consider the tensor-rank-$r$ GHZ state
\begin{align}\label{GHZ}
[\tau_{r,m}]=\left[\sum_{a=1}^r e_a^{\otimes m} \right] \subseteq \proj((\complex^r)^{\otimes m}).
\end{align}
It is straightforward to verify that $\Ima_{\proj{\V}}([\tau_{r,m}])$ is the set of states in $\proj{\V}$ of tensor rank at most $r$, and hence ${\overline{\Ima_{\proj{\V}}([\tau_{r,m}])}=\sigma_r}(Y)$, where $Y$ is the Segre variety of $\proj \V$ defined in~\eqref{seg}. It is clear that $\mathcal{O}_{[\tau_{r,m}]}$ is the set of states of the form $[\sum_{a=1}^r x_{a,1} \otimes \dots \otimes x_{a,m}]$, where $\{x_{1,j},\dots, x_{r,j}\}\subseteq \complex^{r}$ is linearly independent for all $j \in [m]$.

We conclude this subsection by proving that $\Ima_{\proj{\V}}([w])$ and $\mathcal{O}_{[v]}$ are both irreducible and constructible (Fact~\ref{fact:irred_construct}), and that in many cases a generic state $[w] \in \proj \W$ maximizes $\dim(\overline{\Ima_{\proj \V}([w])})$ (Fact~\ref{fact:dim}). We will use Facts~\ref{fact:irred_construct} and~\ref{fact:dim}, respectively, to prove our characterization of generic LUSD with a resource state, and to prove that a generic resource state is maximally useful for LUSD (see Theorem~\ref{USD_characterization} and the subsequent discussion). We defer the proofs of these facts to Appendix~\ref{app:proofs}.

\begin{fact}\label{fact:irred_construct}
Let $\V=\bigotimes_{j=1}^m \complex^{d_j}$ and $\W=\bigotimes_{j=1}^{m} \complex^{c_j}$ be vector spaces, and let $[w] \in \proj{\W}$ and ${[v] \in \proj \W}$ be states. Then the sets $\Ima_{\proj{\V}}([w])$ and $\mathcal{O}_{[v]}$ are both irreducible and constructible in the Zariski topology.
\end{fact}

Since both  $\Ima_{\proj{\V}}([w])$ and $\mathcal{O}_{[v]}$ are constructible, each contains an open-dense subset of its closure~\cite[Lemma 2.1]{An:2012aa}. In fact, $\mathcal{O}_{[v]}$ is itself an open-dense subset of its closure (i.e. $\mathcal{O}_{[v]}$ is \textit{locally closed})~\cite[Proposition 8.3]{humphreys2012linear}. Observe that $\overline{\mathcal{O}_{[v]}}=\overline{\Ima_{\proj{\V}}([v])}$.

\begin{fact}\label{fact:dim}
Let $\V=\bigotimes_{j=1}^m \complex^{d_j}$ and $\W=\bigotimes_{j=1}^{m} \complex^{c_j}$ be vector spaces with $c_j \leq d_j$ for all $j \in [m]$. Then $\dim(\overline{\Ima_{\proj \V}([w])})$ is maximized for a generic state $[w] \in \proj \W$.
\end{fact}
It would be nice to know if the condition that $c_j \leq d_j$ for all $j \in [m]$ is necessary for Fact~\ref{fact:dim} to hold.

\subsection{Local unambiguous state discrimination (LUSD)}\label{sec:LUSD}

We conclude this section by reviewing {unambiguous state discrimination} (USD), and its local counterpart, {local unambiguous state discrimination} (LUSD). We use these notions in Section~\ref{sec:genericLUSD} to characterize generic LUSD with a resource state.

An $n$-tuple of states $([v_1],\dots,[v_n]) \in \proj(\V)^{\times n}$ is \textit{(unambiguously) discriminable} if there exists a quantum measurement with $n+1$ outcomes $\{1,\dots,n,?\}$ that, when performed on any $[v_a]$, outputs either $a$ or $?$, with non-zero probability to output $a$. Mathematically, this is equivalent to the existence positive semidefinite operators $M_1,\dots,M_n, M_? \in \Pos(\V)$ for which $M_1+\dots+M_n+M_?=\I$ and $(\braket{v_b}{M_a v_b} \neq 0 \iff a = b)$.  Note that $([v_1],\dots,[v_n])$ is discriminable if and only if it is linearly independent.

Let $\V = \bigotimes_{j=1}^m \complex^{d_j}$ be a vector space. We say that an $n$-tuple of states
\begin{align}
{([v_1],\dots,[v_n]) \in \proj(\V)^{\times n}}
\end{align}
is \textit{locally (unambiguously) discriminable} if it is discriminable by a measurement implementable by an LOCC channel, with local subsystems $\proj \complex^{d_j}$. We say that an $n$-tuple of states $([v_1],\dots,[v_n])$ is \textit{locally (unambiguously) discriminable with (resource state)} ${[w] \in \proj{\W}}$ if $([v_1 \otimes w], \dots, [v_n \otimes w])$ is locally discriminable, where the local subsystem $j\in [m]$ is now the composite system $\proj (\complex^{d_j} \otimes \complex^{c_j})$. This is equivalent to $([v_1],\dots,[v_n])$ being locally discriminable via an LOCC measurement with pre-shared entanglement $[w]$ \cite{PhysRevA.94.022311}. 

\section{Generic local state discrimination with pre-shared entanglement}\label{sec:genericLUSD}

In this section, we characterize the maximum number of generic pure states that can be locally discriminated with a fixed resource state $[w]$, and observe that a generic resource state $[w]$ maximizes this number. To prove this characterization, we require the following mathematical description of LUSD with a resource state.

\begin{theorem}[\cite{PhysRevA.94.022311}]\label{cheflesext}
Let $\V=\bigotimes_{j=1}^m \complex^{d_j}$ and $\W=\bigotimes_{j=1}^m \complex^{c_j}$. An $n$-tuple of states $([v_1],\dots,[v_n]) \in \proj(\V)^{\times n}$ is locally (unambiguously) discriminable with resource state $[w]\in \proj{\W}$ if and only if there exist states
\begin{align}
[u_1],\dots, [u_n] \in \Ima_{\proj \V} ([w])
\end{align}
for which $(u_a^\t {v_b} \neq 0 \iff a= b)$, where the transpose can equivalently be taken with respect to any product basis of $\V$.
\end{theorem}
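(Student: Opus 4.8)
The plan is to prove this directly, reducing LUSD-with-resource to ordinary LUSD in the enlarged space, establishing an explicit dictionary between product ``detector'' vectors in that space and elements of the SLOCC image, and then proving the resource-free characterization in both directions. First I would unwind the definition: $\{[v_1],\dots,[v_n]\}$ is locally discriminable with $[w]$ if and only if $\{[x_1],\dots,[x_n]\}$ is locally discriminable in $\bigotimes_{j=1}^m(\complex^{d_j}\otimes\complex^{c_j})$ with party $j$ holding $\complex^{d_j}\otimes\complex^{c_j}$, where $x_a=v_a\otimes w$. So it suffices to prove the resource-free statement (essentially Chefles' characterization~\cite{PhysRevA.69.050307}) that $\{[x_1],\dots,[x_n]\}$ is locally discriminable if and only if there exist product vectors $\Phi_a=\phi_{a,1}\otimes\dots\otimes\phi_{a,m}$ (product across the $m$ parties) with $\braket{\Phi_a}{x_b}\neq 0\iff a=b$, and then to translate the product-detector condition into the SLOCC-image condition.

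The dictionary is a direct index computation. Reshaping each conjugate $\overline{\phi_{a,j}}$ into a matrix $B_{a,j}\in\Lin(\complex^{c_j},\complex^{d_j})$ and expanding $x_b=v_b\otimes w$ in the standard product basis gives $\braket{\Phi_a}{v_b\otimes w}=u_a^\t v_b$, where $u_a=(B_{a,1}\otimes\dots\otimes B_{a,m})w\in\widehat{\Ima_{\proj\V}([w])}$ and the transpose is with respect to that basis. Conjugation-and-reshaping is invertible and the map $(B_{a,j})_j\mapsto u_a$ is onto $\widehat{\Ima_{\proj\V}([w])}$ by definition, so the value $\braket{\Phi_a}{x_b}$ depends only on $u_a$ and every such $u_a$ is realized. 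Hence the existence of product detectors $\Phi_a$ for the $x_b$ is equivalent to the existence of $[u_1],\dots,[u_n]\in\Ima_{\proj\V}([w])$ with $u_a^\t v_b\neq 0\iff a=b$ (the nonvanishing of $u_a^\t v_a$ forces $u_a\neq 0$, so each $[u_a]$ and its membership in the SLOCC image are well-defined).

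For necessity of the product detectors I would use the standard inclusion of LOCC measurements in separable measurements~\cite{watrous_2018}: each element $M_a$ of an LOCC measurement is separable, hence a finite sum $M_a=\sum_l\ketbra{\Phi_{a,l}}{\Phi_{a,l}}$ of rank-one product operators. The discrimination conditions become $\sum_l\abs{\braket{\Phi_{a,l}}{x_b}}^2\neq 0\iff a=b$; since every term is non-negative, for $b\neq a$ each $\braket{\Phi_{a,l}}{x_b}=0$, while for $b=a$ some $\braket{\Phi_{a,l_a}}{x_a}\neq 0$. Setting $\Phi_a:=\Phi_{a,l_a}$ produces the required detectors, and the dictionary converts them into the $[u_a]$.

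The hard part will be sufficiency, since separable measurements strictly contain LOCC ones: I cannot simply assemble the separable measurement $\{\delta\ketbra{\Phi_a}{\Phi_a}\}_a\cup\{M_?\}$ and declare it LOCC, and a naive sequential filtering protocol disturbs later rounds. The clean fix is a single-round protocol with shared randomness: the parties jointly sample $a\in[n]$ uniformly, each party $j$ performs the local two-outcome filter $\{K_{a,j},\sqrt{\I-K_{a,j}^*K_{a,j}}\}$ with $K_{a,j}=\sqrt{\delta}\,\ketbra{\phi_{a,j}}{\phi_{a,j}}/\norm{\phi_{a,j}}^2$ (valid since $K_{a,j}^*K_{a,j}\preceq\delta\,\I\preceq\I$ for $\delta\leq 1$) and broadcasts success or failure, and the outcome is $a$ if all $m$ parties succeed and $?$ otherwise. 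This is manifestly LOCC, and because branch $a$ only ever filters toward $\Phi_a$, the element for outcome $a$ is exactly $M_a=\tfrac{\delta^m}{n}\ketbra{\Phi_a}{\Phi_a}/\prod_j\norm{\phi_{a,j}}^2$, with no cross-talk between branches. Thus $\braket{x_b}{M_a x_b}\propto\abs{\braket{\Phi_a}{x_b}}^2\neq 0\iff a=b$, so the measurement unambiguously discriminates $\{x_b\}$. Combining this with the dictionary and the necessity argument proves the theorem; the two places demanding care are this LOCC-implementation step and the conjugation/reshaping bookkeeping that turns the Hermitian pairing $\braket{\Phi_a}{\cdot}$ into the bilinear pairing $u_a^\t(\cdot)$.
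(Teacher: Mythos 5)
The paper does not prove this theorem; it imports it from Bandyopadhyay et al.\ \cite{PhysRevA.94.022311} (with the trivial-resource case attributed to Chefles \cite{PhysRevA.69.050307}), so there is no in-paper proof to compare against line by line. Your argument is a correct, self-contained reconstruction of the cited result, and it follows what is essentially the standard route: the reduction to discriminating $\{[v_a\otimes w]\}$ is just the paper's definition of LUSD with a resource state; the vectorization dictionary $\braket{\Phi_a}{v_b\otimes w}=u_a^\t v_b$ with $u_a=(B_{a,1}\otimes\cdots\otimes B_{a,m})w$ checks out as an index computation and correctly identifies the image of the reshaping map with the affine cone over $\Ima_{\proj\V}([w])$ (minus zero, which is excluded since $u_a^\t v_a\neq 0$); necessity via $\LOCC\subseteq\Sep$ and extraction of a single rank-one product detector from each separable POVM element is sound; and the sufficiency construction --- shared randomness over $a\in[n]$ followed by independent local two-outcome filters, so that outcome $a$ is reachable only in branch $a$ and $M_a\propto\ketbra{\Phi_a}{\Phi_a}$ exactly --- is a clean way to sidestep the $\Sep\neq\LOCC$ issue and is manifestly implementable in one round of LOCC. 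The only point you leave unaddressed is the final clause of the statement, that the transpose can be taken with respect to \emph{any} product basis: your dictionary produces $u_a^\t v_b$ only in the standard basis. The paper handles this in the remark immediately following the theorem (any two product bases differ by $A=A_1\otimes\cdots\otimes A_m\in\GL(\V)$, and $\Ima_{\proj\V}([w])$ is invariant under such maps), and you should append that one-line observation to make the proof complete as stated.
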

Note that this statement indeed does not depend on the choice of product basis over which the transpose is taken, since any two product bases are related by a product change of basis $A=A_1 \otimes \dots \otimes A_m \in \GL(\V)$, and
\begin{align}
[u_a]\in \Ima_{\proj \V} ([w]) \iff [A u_a] \in \Ima_{\proj \V} ([w]).
\end{align}
Also note that, if the Hermitian inner product is preferred, an alternative (equivalent) statement is that there exist $[u_1],\dots,[u_n] \in \Ima_{\proj \V} ([\overline{w}])$ such that ${({\braket{u_a}{v_b}}\neq 0 \iff a= b)}$, where $\overline{w}$ denotes the complex conjugate of $w$ with respect to any product basis of $\W$.

The SLOCC image of a trivial (i.e. non-existent) resource state is simply the Segre variety $Y$ of product states, defined in~\eqref{seg}. This case of Theorem~\ref{cheflesext}, first proven by Chefles, was used to characterize generic LUSD in~\cite{article}, and we use the above generalization of Bandyopadhyay et al. to characterize generic LUSD with a resource state.

\begin{theorem}\label{USD_characterization}
Let $n$ be a positive integer, let $\V=\bigotimes_{j=1}^m \complex^{d_j}$ and $\W=\bigotimes_{j=1}^m \complex^{c_j}$ be vector spaces, let $[w] \in \proj \W$ be a quantum state, let $X=\overline{\Ima_{\proj{\V}} ([w])}$ be the closure of the SLOCC image of $[w]$, and let $d= \dim(X)$. If $n \leq d+1$, then a generic $n$-tuple of states in $\proj \V$ is locally (unambiguously) discriminable with $[w]$. If $n> d+1$, then a generic $n$-tuple of states in $\proj \V$ is not locally (unambiguously) discriminable with $[w]$.
\end{theorem}
By Fact~\ref{fact:dim}, if $c_j \leq d_j$ for all $j \in [m]$, then $\dim(\overline{\Ima_{\proj{\V}} ([w])})$ is maximized for a generic resource state $[w] \in \proj \W$. It follows from Theorem~\ref{USD_characterization} that a generic resource state $[w]$ can be used to discriminate the maximum number of generic states in this setting. It would be nice to know if the condition that $c_j \leq d_j$ for all $j \in [m]$ can be dropped from Fact~\ref{fact:dim}, as this would imply that a generic resource state can be used to discriminate the maximum number of generic states under any choice of $\V$ and $\W$.

We emphasize that the third sentence of Theorem~\ref{USD_characterization} is much stronger than a simple converse to the second sentence: It asserts that an open-dense subset of $\proj (\V)^{\times n}$ is not locally discriminable with $[w]$.

We remark that similar characterizations of generic LUSD can be obtained in symmetric and antisymmetric space, but we omit stating such results explicitly.

\begin{example}
	Suppose $m = 3$ and $c_j = d_j = 2$ for all $j$, so that $\V = \W = \complex^2 \otimes \complex^2 \otimes \complex^2$ is three-qubit space. Then there are six different SLOCC orbits, with projective dimensions and representative members $[v] \in \mathcal{O}_{[v]}$ as follows (see \cite[Table~10.3.1]{landsberg2012tensors}, for example):
	\begin{align*}
		[v] & = [e_1 \otimes e_1 \otimes e_1] & \Longrightarrow && \dim(\mathcal{O}_{[v]}) & = 3, \\
		[v] & = [e_1 \otimes e_1 \otimes e_1 + e_1 \otimes e_1 \otimes e_2] & \Longrightarrow && \dim(\mathcal{O}_{[v]}) & = 4, \\
		[v] & = [e_1 \otimes e_1 \otimes e_1 + e_1 \otimes e_2 \otimes e_1] & \Longrightarrow && \dim(\mathcal{O}_{[v]}) & = 4, \\
		[v] & = [e_1 \otimes e_1 \otimes e_1 + e_2 \otimes e_1 \otimes e_1] & \Longrightarrow && \dim(\mathcal{O}_{[v]}) & = 4, \\
		[v] & = [e_1 \otimes e_1 \otimes e_2 + e_1 \otimes e_2 \otimes e_1 + e_2 \otimes e_1 \otimes e_1] & \Longrightarrow && \dim(\mathcal{O}_{[v]}) & = 6, \\
		[v] & = [\tau_{2,3}] = [e_1 \otimes e_1 \otimes e_1 + e_2 \otimes e_2 \otimes e_2] & \Longrightarrow && \dim(\mathcal{O}_{[v]}) & = 7.
	\end{align*}
	So, for example, Theorem~\ref{USD_characterization} tells us that a generic set of $4$ states is locally discriminable with a fully separable resource state, a generic set of $5$ states is locally discriminable with a bi-separable resource state, a generic set of $7$ states is locally discriminable with the ``W state'' $[e_1 \otimes e_1 \otimes e_2 + e_1 \otimes e_2 \otimes e_1 + e_2 \otimes e_1 \otimes e_1]$ as a resource, and a generic set of $8$ states is locally discriminable with the GHZ state $[\tau_{2,3}]$ as a resource.
\end{example}
	
\begin{proof}[Proof of Theorem~\ref{USD_characterization}]
We will make use of the map $\tilde{\pi}$ defined in~\eqref{pi_tilde}, as well as several facts observed in Section~\ref{mp}. Suppose first that $n > d+1$. By Theorem~\ref{dimension_equivalence}, there exists an open-dense subset $V \subseteq \Gr(d_1\cdots d_m-n-1, \proj \V)$ for which every element (i.e. subspace) in $V$ is disjoint from $X$. By Facts~\ref{fact:generic} and~\ref{fact:ortho}, $\tilde{\pi}^{-1}(V^{\perp})\subseteq \proj(\V)^{\times n-1}$ is open-dense, so
\begin{align}
U:=\proj(\V) \times \tilde{\pi}^{-1}(V^{\perp}) \subseteq \proj(\V)^{\times n}
\end{align}
is open-dense. For any $([v_1],\dots, [v_n]) \in U$, it holds that
\begin{align}
\spn\{[v_2],\dots, [v_n]\}^{\perp} \cap X = \{\},
\end{align}
by definition of $V$. By Theorem~\ref{cheflesext}, $([v_1],\dots, [v_n])$ is not locally (unambiguously) discriminable with resource state $[w]$.



Conversely, suppose $n \leq d+1$. By Theorem~\ref{cheflesext}, the desired result is equivalent to the existence of an open-dense subset of $\proj(\V)^{\times n}$ contained in $\bigcap_{a=1}^n S_a$, where
\begin{align*}
S_a= \{([v_1],\dots,[v_n]): \text{there exists} \quad [u] \in \Ima_{\proj \V}([w]) \quad\text{such that} \quad (u^\t v_b \neq 0 \iff a= b) \}
\end{align*}
for each $a \in [n]$. By Fact~\ref{fact:irred_construct}, $\Ima_{\proj \V}([w])$ is constructible. It follows that $S_a$ is constructible, so it contains an open-dense subset of its closure~\cite[Lemma 2.1]{An:2012aa}. To complete the proof, it suffices to show that $\overline{S_a}=\proj(\V)^{\times n}$ all $a \in [n]$. We do so by constructing a subset of $S_a$ that is dense in $\proj(\V)^{\times n}$.

We take $a=n$ to ease the notation (the other $a\in [n]$ follow by symmetry). Let ${U \subseteq {\Ima_{\proj \V}([w])}}$ be an open-dense subset of $X$. Then $U$ is an (irreducible) quasiprojective variety of dimension $d$. By Theorem~\ref{dimension_equivalence} and the inequality $n \leq d+1$, there exists an open-dense subset $W \subseteq \Gr(d_1\cdots d_m-n, \proj \V)$ for which every element (i.e. subspace) in $W$ intersects $U$. Therefore, the set $Z=\tilde{\pi}^{-1}(W^\perp)\subseteq \proj(\V)^{\times n-1}$ is open-dense, and for each $v:=([v_1],\dots,[v_{n-1}]) \in Z$ there exists $[u_v] \in U$ with $u_v^\t{v_b}=0$ for all $b\in [n-1]$. Let
\begin{align}
T_v = \{ [v_n] \in \proj \V : u_v^\t v_n\neq 0\}.
\end{align}
The set
\begin{align}
V_n=\bigcup_{v \in Z} \{v\} \times T_v
\end{align}
is clearly contained in $S_n$. To complete the proof, we show that $\overline{V_n}=\proj(\V)^{\times n}$. For any open-dense subset $S \subseteq \proj(\V)^{\times n}$, there exists $v \in Z$ for which the set $(\{v\} \times \proj(\V)) \cap S$ is open-dense inside $\{v\} \times \proj(\V)$. Since $T_v \subseteq \proj \V$ is open-dense, it follows that ${\{v\} \times T_v \subseteq \{v\} \times \proj \V}$ is open-dense, so
\begin{align}
(\{v\} \times T_v) \cap S\subseteq \{v\} \times \proj(\V)
\end{align}
is open-dense. Thus, $V_n \cap S \neq \{\}$. Since $V_n$ intersects every open-dense subset ${S \subseteq \proj (\V)^{\times n}}$, it follows that $\overline{V_n}=\proj(\V)^{\times n}$. This completes the proof.
\end{proof}

\begin{cor}\label{cor:GHZ}
Let $n$ and $r$ be positive integers, let $\V=\bigotimes_{j=1}^m \complex^{d_j}$ be a vector space, let $Y \subseteq \proj \V$ be the Segre variety defined in~\eqref{seg}, let $X=\sigma_r(Y)$ be the $r$-th secant variety, and let $d=\dim(X)$. If $n\leq d+1$, then a generic $n$-tuple of states in $\proj (\V)$ is locally (unambiguously) discriminable with the tensor-rank-$r$ GHZ state $[\tau_{r,m}],$ defined in~\eqref{GHZ}. If $n > d+1$, then a generic $n$-tuple of states in $\proj (\V)$ is not locally (unambiguously) discriminable with $[\tau_{r,m}]$.
\end{cor}
Note that the standard upper bound on $\dim(\sigma_r(Y))$, reviewed in~\eqref{eq:dim_segre_secant}, yields the upper bound mentioned in the introduction on the number of generic states locally discriminable with $[\tau_{r,m}]$. The dimension of $\sigma_r(Y)$ is known in several cases (see Section~\ref{sec:secant}), in particular, when $r=1$ or $m=2$. The next two corollaries follow from these known dimensions.
\begin{cor}\label{WS_generic}
Let $n$ be a positive integer, and let $\V=\bigotimes_{j=1}^m \complex^{d_j}$ be a vector space. If ${n \leq \sum_{j=1}^m (d_j-1)+1}$, then a generic $n$-tuple of states in $\proj (\V)$ is locally (unambiguously) discriminable. If $n > \sum_{j=1}^m (d_j-1)+1$, then a generic $n$-tuple of states in $\proj (\V)$ is not locally (unambiguously) discriminable.
\end{cor}
Corollary~\ref{WS_generic} follows from the fact that the SLOCC image of a trivial (i.e. non-existent) resource state is simply the Segre variety, which has dimension $\sum_{j=1}^m (d_j-1)$. Corollary~\ref{WS_generic} strengthens Theorem 4.3 in \cite{article} in two ways: First, the notion of ``generic" used by Walgate and Scott is measure-theoretic, which is weaker than our algebraic-geometric definition of ``generic." Second, when $n > \sum_{j=1}^m (d_j-1)+1$, Walgate and Scott simply prove that the subset of $\proj(\V)^{\times n}$ consisting of locally discriminable $n$-tuples of states is not full measure, which is much weaker than our result that a full-measure (and in fact, open-dense) subset of $\proj(\V)^{\times n}$ is not locally discriminable.

The known dimensions of secant varieties in bipartite space allow us to quantify exactly how many generic states are locally discriminable with an arbitrary resource state:
\begin{cor}\label{cor:bipartite}
Let $n$ and $r$ be positive integers, let $\V=\complex^{d_1}\otimes\complex^{d_2}$ and $\W=\complex^{c_1}\otimes\complex^{c_2}$ be vector spaces. If $n \leq d_1d_2-(d_1-\min\{d_1,r\})(d_2-\min\{d_2,r\})$, then a generic $n$-tuple of states in $\proj (\V)$ is locally (unambiguously) discriminable with any Schmidt-rank-$r$ resource state $[w]\in \proj \W$. If ${n > d_1d_2-(d_1-\min\{d_1,r\})(d_2-\min\{d_2,r\})}$, then a generic $n$-tuple of states in $\proj (\V)$ is not locally (unambiguously) discriminable with any Schmidt-rank-$r$ resource state $[w]\in \proj \W$.
\end{cor}

For example, if $d_1 = d_2$ is large and $r$ is small and fixed, then Corollary~\ref{cor:bipartite} says that a Schmidt-rank-$r$ resource state can be used to locally discriminate a generic tuple of linearly many ($d^2 - (d-r)^2 = 2dr - r^2$) pure states. At the other extreme, if the resource state is maximally entangled (so $r = d$) then it can be used to discriminate a generic tuple of quadratically many ($d^2 - (d-r)^2 = d^2$) pure states.

\section{Entangled subspaces of maximum dimension}\label{sec:entangled_subspaces}

For a vector space $\V=\bigotimes_{j=1}^m \complex^{d_j}$, we define a \textit{(standard)} $r$\textit{-entangled (projective linear) subspace} of $\proj \V$ to be a projective linear subspace disjoint from $\sigma_r(Y)$, where $Y$ is the Segre variety of product states~\eqref{seg}. In other words, an $r$-entangled subspace is one that does not contain any states of border rank at most $r$. Similarly, we define a \textit{symmetric $r$-entangled (projective linear) subspace} of $\proj({\bigvee}^m \complex^d)$ to be a projective linear subspace disjoint from $\sigma_r(\nu_m(\proj^{d-1}))$, and an \textit{antisymmetric $r$-entangled (projective linear) subspace} of $\proj({\bigwedge}^m \complex^d)$ to be a projective linear subspace disjoint from $\sigma_r(\Gr(m-1,\proj^{d-1}))$. These three types of $r$-entangled subspaces correspond to entangled subspaces in systems of distinguishable, bosonic, and fermionic particles, respectively~\cite{Grabowski_2012}.

In Appendix~\ref{app:dim} we use Theorem~\ref{dimension_equivalence} to determine the maximum dimensions of these entangled subspaces. In particular, Corollaries~\ref{entangled_dimension},~\ref{symmetric_dimension}, and~\ref{antisymmetric_dimension} establish the maximum possible dimension of standard $r$-entangled, symmetric $r$-entangled, and antisymmetric $r$-entangled subspaces, respectively, and show that a generic subspaces of these dimensions are $r$-entangled. Despite the abundance of entangled subspaces of maximum dimension, explicit constructions of them are only known for standard $r$-entangled subspaces in the $r = 1$ case \cite{BHAT_2006} and the $m = 2$ case \cite{Cubitt_2008}. In this section, we match these results for symmetric and antisymmetric $r$-entangled subspaces by providing explicit constructions in these two cases. We also construct maximal $2$-entangled subspaces of $\proj(\complex^{d_1}\otimes \complex^{d_2}\otimes \complex^2)$ for $d_1,d_2 \in \{2,3\}$, and of $\proj(\complex^2 \otimes \complex^2 \otimes \complex^2 \otimes \complex^2)$.

Many of these constructions (and indeed, many of the already known constructions of similar subspaces from \cite{Cubitt_2008,Heinosaari_2013,CDJ13}) are based on \emph{totally non-singular matrices}, which are matrices with the property that all of their minors (i.e., determinants of square submatrices) are non-zero. For example, every Vandermonde matrix
\begin{align}\label{eq:vandermonde}
	 \begin{bmatrix}
		1 & \alpha_1 & \alpha_1^2 & \cdots & \alpha_1^{d-1} \\
		1 & \alpha_2 & \alpha_2^2 & \cdots & \alpha_2^{d-1} \\
		\vdots & \vdots & \vdots & \ddots & \vdots \\
		1 & \alpha_d & \alpha_d^2 & \cdots & \alpha_d^{d-1}
	\end{bmatrix}
\end{align}
with $\alpha_i \neq \alpha_j \neq 0$ for all $i \neq j$ is totally non-singular (a slightly stronger property of these matrices, called total positivity, was proved in the case when $0 < \alpha_1 < \cdots < \alpha_d$ in \cite{Fal01}, but the same proof works for total non-singularity in general). In fact, total non-singularity is a generic phenomenon: the set of totally non-singular matrices is open-dense.

The following result, which was proved in \cite[Lemma~9]{Cubitt_2008}, provides the reason that totally non-singular matrices are of use to us.

\begin{lemma}\label{lem:tot_nonsing_cols}
	Let $M$ be an $n \times n$ totally non-singular matrix with $n \geq k$, and let $v \in \complex^n$ be a linear combination of $k$ of the columns of $M$. Then $v$ contains at least $n-k+1$ non-zero entries.
\end{lemma}

\subsection{Maximal symmetric $r$-entangled subspaces of bipartite space}\label{sec:explicit_symmetric_bipartite}

To construct a symmetric $r$-entangled subspace of $\proj(\complex^d \vee \complex^d)$ attaining the bound~\eqref{symmetric_bipartite_dim}, we first note that the isomorphism $\complex^d \otimes \complex^d \cong \Lin((\complex^d)^*,\complex^d)$ shows that it is equivalent to construct a projective linear subspace of symmetric $d\times d$ matrices of rank greater than $r$ of dimension
\[
	\binom{d-r+1}{2}-1.
\]
We construct such a subspace by placing columns of totally non-singular matrices along the super- and sub-diagonals of those symmetric matrices. More specifically, for each ${0 \leq i \leq d-r-1}$ and $1 \leq j \leq d-r-i$, let $M_j^i$ be the matrix that has the $j$-th column of some $(d-i) \times (d-i)$ totally non-singular matrix along its $i$-th super-diagonal (where for each fixed $i$, the same totally non-singular matrix is used for all $j$). For example, if $d = 6$, $r = 2$, $i = 1$, and we choose the totally non-singular matrix to be the Vandermonde matrix~\eqref{eq:vandermonde} with $\alpha_j = j$ for $1 \leq j \leq d-i = 5$, then
\[
	M_1^1 = \begin{bmatrix}
		0 & 1 & 0 & 0 & 0 & 0 \\
		0 & 0 & 1 & 0 & 0 & 0 \\
		0 & 0 & 0 & 1 & 0 & 0 \\
		0 & 0 & 0 & 0 & 1 & 0 \\
		0 & 0 & 0 & 0 & 0 & 1 \\
		0 & 0 & 0 & 0 & 0 & 0 \\
	\end{bmatrix}, \ M_2^1 = \begin{bmatrix}
		0 & 1 & 0 & 0 & 0 & 0 \\
		0 & 0 & 2 & 0 & 0 & 0 \\
		0 & 0 & 0 & 3 & 0 & 0 \\
		0 & 0 & 0 & 0 & 4 & 0 \\
		0 & 0 & 0 & 0 & 0 & 5 \\
		0 & 0 & 0 & 0 & 0 & 0 \\
	\end{bmatrix}, \ M_3^1 = \begin{bmatrix}
		0 & 1 & 0 & 0 & 0 & 0 \\
		0 & 0 & 4 & 0 & 0 & 0 \\
		0 & 0 & 0 & 9 & 0 & 0 \\
		0 & 0 & 0 & 0 & 16 & 0 \\
		0 & 0 & 0 & 0 & 0 & 25 \\
		0 & 0 & 0 & 0 & 0 & 0 \\
	\end{bmatrix}.
\]

We claim that the following set of symmetric matrices is a basis of a symmetric $r$-entangled subspace of $\proj(\complex^d \vee \complex^d)$ (once we convert the matrices back into states in the canonical way):
\begin{align*}
	B = \Big\{ \left[M_j^i + \big(M_j^i\big)^\t \right]: 0 \leq i \leq d-r-1, 1 \leq j \leq d-r-i \Big\}.
\end{align*}

The fact that the set~$B$ is linearly independent (and thus a basis of its span) follows immediately from Lemma~\ref{lem:tot_nonsing_cols}: every non-zero linear combination of those basis matrices has, for each $0 \leq i \leq d-r-1$, at least $(d-i) - (d-r-i) + 1 = r+1 \geq 1$ non-zero entries along its $i$-th super-diagonal, and thus does not equal the zero matrix.

In fact, this argument also shows why this subspace is $r$-entangled: every non-zero diagonal of a matrix $M \in \mathrm{span}(B)$ contains at least $r+1$ non-zero entries, so there is an $(r+1) \times (r+1)$ submatrix of $M$ that is upper triangular with non-zero diagonal entries (and is thus invertible), so $\mathrm{rank}(M) \geq r+1$. Since the rank of a matrix corresponds to the symmetric tensor rank in $\proj(\complex^d \vee \complex^d)$, the result follows.

All that remains is to count the number of vectors in~$B$:
\begin{align*}
	|B| = \sum_{i=0}^{d-r-1}\big(d-r-i\big) & = (d-r)^2 - \sum_{i=0}^{d-r-1}i \\
	& = (d-r)^2 - \frac{1}{2}(d-r)(d-r-1) = \binom{d-r+1}{2}.
\end{align*}
Since the projective dimension of the subspace is $|B|-1$, this completes the proof.

\subsection{Maximal antisymmetric $r$-entangled subspaces of bipartite space}

To construct an antisymmetric $r$-entangled subspace of $\proj(\complex^d \wedge \complex^d)$ attaining the bound~\eqref{eq:antisy_bipartite}, we note that the isomorphism $\complex^d \otimes \complex^d \cong \Lin((\complex^{d})^*,\complex^{d})$ shows that it is equivalent to construct a projective linear subspace of antisymmetric $d \times d$ matrices with rank greater than $2r$ (not $r$) of dimension
\[
	\binom{d-2r}{2}-1.
\]

The construction of this subspace is identical to the symmetric construction from Section~\ref{sec:explicit_symmetric_bipartite}, except we omit the $M_j^i$ matrices with non-zero entries on the main diagonal (i.e., the ones with $i = 0$), and we subtract in the lower-triangular portion of each matrix instead of adding. That is, a basis of this subspace is
\begin{align*}
	B = \Big\{ \left[M_j^i - \big(M_j^i\big)^\t\right] : 1 \leq i \leq d-2r-1, 1 \leq j \leq d-2r-i \Big\},
\end{align*}
so the dimension of this subspace is
\begin{align*}
	|B|-1 = \binom{d-2r+1}{2} - (d-2r) -1= \binom{d-2r}{2}-1,
\end{align*}
as desired.

\subsection{Maximal symmetric $1$-entangled subspaces of multipartite space}

By Equation~\eqref{symmetric_dim} and the subsequent discussion, the maximum dimension of a symmetric $1$-entangled subspace of $\proj({\bigvee}^m \complex^d)$ is
\begin{align}
\binom{m+d-1}{m}-d-1.
\end{align}
In this section, we construct such a subspace.

Consider the subspace spanned by the linearly independent set
\begin{align}
\{[e_{a_1} \vee \cdots \vee e_{a_m}] : (a_1,\dots,a_m)\in [d]^{\times n} \setminus \Delta_d^n\},
\end{align}
where $\Delta_d^n=\{(a,\dots, a) : a  \in [d]\}$.
This subspace clearly has the correct dimension, and if
\begin{align}
\sum_{a \in [d]^{\times n} \setminus \Delta_d^n} \alpha_a (e_{a_1} \vee \cdots \vee e_{a_m})= x^{\otimes m}
\end{align}
for some $x =\sum_{b=1}^d \beta_b e_b$, then for each $b \in [d]$ the coefficient of $e_b^{\otimes m}$ in the expansion of $x^{\otimes m}$ is zero, and hence $\beta_b=0$. It follows that $x=0$, a contradiction.


\subsection{Maximal antisymmetric $1$-entangled subspaces of multipartite space}

By Equation~\eqref{antisymmetric_dim} and the subsequent discussion, the maximum dimension of an antisymmetric 1-entangled subspace of $\proj (\bigwedge^m \complex^d)$ is
\begin{align}
\binom{d}{m}-m(d-m)-2,
\end{align}
whenever $d \geq m$ (otherwise, $\bigwedge^m \complex^d=0$). We construct a subspace that attains this bound in a somewhat similar manner to the non-positive partial transpose subspaces constructed in \cite{Johnston_2013,Johnston_2019}. Let
\begin{align}
J=\left\{\binom{m}{2}+m-1,\binom{m}{2}+m,\dots, dm-\binom{m}{2}-1, dm-\binom{m}{2}\right\}.
\end{align}
For each $s \in J$, let
\begin{align}
I_s=\left\{(a_1,\dots, a_m) \in [d]^m : 1\leq a_1<\dots< a_m\leq d \quad \text{and} \quad \sum_{j=1}^m a_j =s\right\}.
\end{align}
Let
\begin{align}
\proj \W=\{ [v] \in \textstyle{\proj ( {\bigwedge}^m \complex^d)} : \sum_{a\in I_s} v_a =0\},
\end{align}
where $v_a$ is the coefficient of $e_{a_1} \wedge \dots \wedge e_{a_m}$ in the expansion of $v$ with respect to the standard basis
\begin{align}
\{[e_{a_1}\wedge \dots \wedge e_{a_m}] : 1\leq a_1<\dots< a_m\leq d\}
\end{align}
of $\proj ( {\bigwedge}^m \complex^d)$. We first observe that $\proj \W$ has the correct dimension. Note that ${\abs{J}=m(d-m)+1}$, and one linear constraint is placed on $\W$ for each $s \in J$, so
\begin{align}
\dim(\proj \W)=\dim(\textstyle{\proj ( {\bigwedge}^m \complex^d)})-\abs{J}=\binom{d}{m}-m(d-m)-2,
\end{align}
as desired.

To complete the proof, we need only show that $\proj \W$ does not contain any state of the form $[x_1 \wedge \dots \wedge x_m]$. Proving this property is quite technical, so we begin by proving it in the special case $m=2$ as a warm-up.

Let $[v] \in \proj \W$ be arbitrary, and let
\begin{align}
t= \min \{s \in J : v_a \neq 0 \quad \text{for some} \quad a \in I_s\}.
\end{align}
Let $a,b \in I_t$ be any two multi-indices for which $a \neq b$ and $v_a, v_b \neq 0$. Since ${a_1+a_2=b_1+b_2=t}$, there exists a permutation $\sigma \in S_2$ for which
\begin{align}
b_{\sigma(1)} < a_{\sigma(1)}<a_{\sigma(2)}<b_{\sigma(2)}.
\end{align}

Under the inclusion $\wedge^2 (\complex^d)\subseteq \V_1 \otimes \V_2$, where $\V_1, \V_2 \cong \complex^d$, we can regard $v$ as an element of $\Lin(\V_2^*,\V_1)$.
Under this identification, consider the $4 \times 4$ submatrix of $v$ corresponding to the column index $\{{b_{\sigma(1)}}, {a_{\sigma(1)}},{a_{\sigma(2)}},{b_{\sigma(2)}}\}$ and row index $\{{b_{\sigma(2)}},{a_{\sigma(2)}},{a_{\sigma(1)}},{b_{\sigma(1)}}\}$. It is straightforward to verify that this matrix takes the form
\begin{align}
  \begin{blockarray}{*{4}{c} l}
    \begin{block}{*{4}{>{$\footnotesize}c<{$}} l}
      $b_{\sigma(2)}$ & $a_{\sigma(2)}$ & $a_{\sigma(1)}$ & $b_{\sigma(1)}$ & \\
    \end{block}
    \begin{block}{[*{4}{c}]>{$\footnotesize}l<{$}}
      \pm v_b&0&0& 0& $b_{\sigma(1)}$ \\
       *&\pm v_a& 0&0 & $a_{\sigma(1)}$ \\
       *& * & \pm v_a& 0 & $a_{\sigma(2)}$ \\
      * & * & * & \pm v_b& $b_{\sigma(2)}$\\
    \end{block}
  \end{blockarray},
\end{align}
where an asterisk $(*)$ denotes an entry we don't care about, and a $\pm$ denotes a sign we don't care about. It follows that the rank of $v$ under this identification is at least four. This proves that $[v] \notin \Gr(1, \proj^{d-1})$, since any such state has rank two under this identification. This completes the proof that $\proj \W$ is antisymmetric 1-entangled in the case $m=2$.

We now proceed to prove that $\proj \W$ is antisymmetric 1-entangled for arbitrary $m$. As before, let $[v] \in \proj \W$ be arbitrary, let
\begin{align}
t= \min \{s \in J : v_a \neq 0 \quad \text{for some} \quad a \in I_s\},
\end{align}
and let $a,b \in I_t$ be any two multi-indices for which $a \neq b$ and $v_a, v_b \neq 0$. Let $A=\{a_1,\dots, a_m\}$, $B=\{b_1,\dots, b_m\}$, and let $S=A \cap B$. Let $q=m-\abs{S}$, and let ${s_{q+1}<\dots<s_m \in [d]}$ be such that $\{s_{q+1},\dots, s_{m}\}=S$.

Suppose toward contradiction that $[v]=[x_1 \wedge \dots \wedge x_m]$ for some $x_1,\dots, x_m \in \complex^d$. Writing $x_{j,a} \in \complex$ for the $a$'th coordinate of $x_j$ in the standard basis, we may assume that for any $j\in [m]$ and $i \in \{q+1,\dots,m\}$, it holds that $x_{j,s_i}\neq 0 \iff i=j$. This can be observed by considering the element of $\pi^{-1}[v]$ that is in reduced row echelon form with respect to the re-ordered standard basis in which the elements $e_{s_{q+1}},\dots, e_{s_{q+m}}$ come first. In this nice form, $v_a \neq 0$ implies
\begin{align}
w_{A\setminus S}\neq 0,
\end{align}
where $w=x_1 \wedge \dots \wedge x_q$, so $[w] \in \Gr(q-1,\proj^{d-1})$. The notation $w_{A\setminus S}$ denotes the coefficient of $w$ with respect to the standard basis element $e_{c_1} \wedge \dots \wedge e_{c_q}$, where ${c_1<\dots<c_q \in [d]}$ are such that $\{c_1,\dots, c_q\}=A\setminus S$. Similarly, $v_b \neq 0$ implies $w_{B \setminus S} \neq 0$. Let $Q=(A \cup B)\setminus S$, and let $\tilde{c}_1<\dots<\tilde{c}_{2q} \in [d]$ be such that $\{\tilde{c}_1,\dots, \tilde{c}_{2q}\}=Q$. Regarding $w$ as an element of $\Lin(\bigotimes_{k=2}^q \V_k^*,\V_1)$, consider the submatrix of $w$ with column index $Q$ and row index $\{\tilde{c}_{\hat{1}},\dots,\tilde{c}_{\hat{2q}}\}\subseteq [d]^{\times q-1}$, where

\begin{align}
\tilde{c}_{\hat{j}}=
\begin{cases}
(A \setminus S)\setminus \{\tilde{c}_{j}\} ,& \tilde{c}_j \in A \setminus S\\
(B \setminus S)\setminus \{\tilde{c}_{j}\} ,& \tilde{c}_j \in B \setminus S
\end{cases}\quad \text{for each} \quad j \in [2q].
\end{align}
Similarly to before, we have implicitly identified the set $(A \setminus S)\setminus \{\tilde{c}_{j}\}$ with a strictly increasing $q-1$-tuple, and likewise for $B$. By the minimality of $t$, it follows that this submatrix is lower-triangular with each diagonal entry equal to $\pm w_{A \setminus S}$ or $\pm w_{B \setminus S}$, and hence has rank $2q$. This is a contradiction, as any element of $\Gr(q-1,\proj^{d-1})$ has rank $q$ when viewed in this way. This completes the proof that $\proj \W$ is antisymmetric 1-entangled.

\subsection{Some maximal $2$-entangled subspaces of multipartite space}\label{sec:multipartite_higher_rank}

There are a handful of higher-rank multipartite cases where it is straightforward to construct a $2$-entangled subspace of $\proj(\bigotimes_{j=1}^m\mathbb{C}^{d_j})$ of maximum dimension, and we consider these well known.

For example, if $m = 3$ and $d_1 = d_2 = d_3 = 2$ then the maximum dimension of a $2$-entangled subspace is zero, and an explicit example is simply given by
\begin{align}
[e_2 \otimes e_1 \otimes e_1+e_1\otimes e_2 \otimes e_1+e_1 \otimes e_1 \otimes e_2].
\end{align}
Similarly, if $m = 3$ and $d_1 = 3$, $d_2 = d_3 = 2$ then we see that $2$-entangled subspaces cannot be any larger than $1$-dimensional, since we can identify $\complex^3 \otimes \complex^2 \otimes \complex^2 \cong \complex^3 \otimes \complex^4$ in the natural way and then use the bipartite bound from \cite{Cubitt_2008}. Furthermore, we can construct a $2$-entangled subspace of this dimension just by using the fact that a state has border rank $\geq 3$ if and only if it has flattening rank $\geq 3$~\cite[Theorem 5.1]{landsberg2004ideals}, so the explicit construction of a $2$-entangled subspace of $\proj(\complex^3 \otimes \complex^4)$ of dimension $1$ from \cite{Cubitt_2008} also works in $\proj(\complex^3 \otimes \complex^2 \otimes \complex^2)$.

In general, however, explicit constructions of $r$-entangled subspaces in this multipartite higher-rank setting are rather ad-hoc. Our contribution here is to present explicit examples of $2$-entangled subspaces of $\proj(\complex^3 \otimes \complex^3 \otimes \complex^2)$ and of $\proj(\complex^2 \otimes \complex^2 \otimes \complex^2\otimes \complex^2)$ (i.e., the smallest non-trivial and previously unknown cases) that are $5$-dimensional, which is maximal by Equation~\eqref{eq:2entangleddim}.

In $\proj(\complex^3 \otimes \complex^3 \otimes \complex^2)$, we first let $\delta,\epsilon,\theta,\kappa \in \complex^4$ and then consider the following set of $6$ states:
\begin{align}
B=\big\{ & \left[ \big(e_1 \otimes e_1+e_2 \otimes e_2 + e_3 \otimes e_3\big)\otimes e_1 \right],\\
& \left[\big(e_1 \otimes e_1+e_2 \otimes e_2+e_3 \otimes e_3 \big)\otimes e_2 \right],\\
& \left[\delta_1 (e_1 \otimes e_2 \otimes e_1)+\delta_2 (e_2 \otimes e_1 \otimes e_1)+\delta_3 (e_3 \otimes e_1 \otimes e_2)+\delta_4 (e_1 \otimes e_3 \otimes e_2)\right],\\
&\left[\epsilon_1 (e_1 \otimes e_2 \otimes e_1)+\epsilon_2 (e_2 \otimes e_1 \otimes e_1)+\epsilon_3 (e_3 \otimes e_1\otimes e_2)+\epsilon_4 (e_1 \otimes e_3 \otimes e_2)\right],\\
&\left[\theta_1 (e_1 \otimes e_2 \otimes e_1)+\theta_2 (e_3 \otimes e_1 \otimes e_1)+\theta_3 (e_3\otimes e_2 \otimes e_2)+\theta_4 (e_2 \otimes e_3 \otimes e_2)\right],\\
&\left[\kappa_1 (e_1 \otimes e_2 \otimes e_1)+\kappa_2 (e_3 \otimes e_1 \otimes e_1)+\kappa_3 (e_3\otimes e_2 \otimes e_2)+\kappa_4 (e_2 \otimes e_3 \otimes e_2)\right]
\big\}\label{eq:qutrittritbit_basis}
\end{align}

It is straightforward to show that if $\{\delta, \epsilon,\theta,\kappa\}$ is linearly independent then so is the set~\eqref{eq:qutrittritbit_basis}, so it is a basis of its span. Under the choice $\delta = (0,1,1,1)$, $\epsilon = (1,1,2,0)$, ${\theta = (1,1,1,0)}$, and $\kappa = (0,2,1,1)$; we have verified using the Macaulay2 software package \cite{M2} that $\spn(B)$ is indeed 2-entangled (i.e. all of its members have border rank $\geq 3$). Alternatively, in Appendix~\ref{app:entangled} we provide an explicit proof that $\spn(B)$ is 2-entangled. The details of this proof are somewhat more complicated than in the other constructions that we have considered (we elaborate on this point at the end of the appendix). We have developed a heuristic algorithm in MATLAB which suggests that most choices of $\{\delta, \epsilon,\theta,\kappa\}$ produce a 2-entangled subspace. Our Macaulay2 and MATLAB codes are available on github.com \cite{entangled_subspaces_code}.

In $\proj(\complex^2 \otimes \complex^2 \otimes \complex^2\otimes \complex^2)$, we similarly let $\phi,\psi \in \complex^4$ and then consider the following set of $6$ states:
\begin{align}
\Big\{ & \left[e_1 \otimes e_1 \otimes e_1 \otimes e_2+e_1 \otimes e_2\otimes e_2 \otimes e_1+e_2 \otimes e_2 \otimes e_1 \otimes e_2\right],\\
&\left[e_1 \otimes e_1 \otimes e_2 \otimes e_1+e_2 \otimes e_1 \otimes e_1 \otimes e_2+e_2 \otimes e_2 \otimes e_2 \otimes e_1\right],\\
&\left[e_1 \otimes e_2 \otimes e_1 \otimes e_1+e_2 \otimes e_1  \otimes e_1 \otimes e_1+e_2\otimes e_1 \otimes e_2 \otimes e_2+e_2 \otimes e_2 \otimes e_1 \otimes e_2\right],\\
&\left[e_1\otimes e_1 \otimes e_2 \otimes e_1+e_1 \otimes e_2\otimes e_1 \otimes e_1+e_1 \otimes e_2 \otimes e_2\otimes e_2+e_2 \otimes e_1 \otimes e_2 \otimes e_2\right],\\
&\left[\phi_1 (e_1^{\otimes 4})+\phi_2(e_1\otimes e_1 \otimes e_2\otimes e_2)+\phi_3(e_2\otimes e_2 \otimes e_1\otimes e_1)+\phi_4(e_2^{\otimes 4})\right],\\
&\left[\psi_1 (e_1^{\otimes 4})+\psi_2(e_1\otimes e_1 \otimes e_2\otimes e_2)+\psi_3(e_2\otimes e_2 \otimes e_1\otimes e_1)+\psi_4(e_2^{\otimes 4})\right]
\Big\}\label{eq:fourqubit_basis}
\end{align}


It is straightforward to show that if $\{\phi,\psi\}$ is linearly independent then so is the set~\eqref{eq:fourqubit_basis}, so it is a basis of its span. Furthermore, if $\phi = (0,1,1,1)$ and $\psi = (1,2,1,0)$ then we have verified via Macaulay2 that its span is $2$-entangled. As before, our numerics suggest that most choices of $\{\phi, \psi\}$ produce a 2-entangled subspace \cite{entangled_subspaces_code}.

\section{Witnesses and multipartite Schmidt number}\label{sec:witnesses}

This is the only section in which we consider mixed states, so we introduce notation and related definitions for them here. For a vector space $\V$ (which we always take to be over $\complex$), let $\density{\V}\subseteq \Lin(\V)$ be the set of positive semidefinite operators on $\V$ with trace one, which we refer to as the set of \textit{density operators}. We identify the set of rank-one density operators with the set of pure quantum states under the natural bijection $vv^* \mapsto [v]$. We call a density operator of rank greater than one a~\textit{mixed state}. Let $\herm{\V}\subseteq \Lin(\V)$ be the set of Hermitian operators on $\V$. For a positive semidefinite operator ${P \in \pos{\V}}$, we define the \textit{support} of $P$ to be the image of $P$, denoted $\Im(P)$. For a subset $Z \subseteq \V$ we say that $\rho$ is \textit{supported on} $Z$ if $\Im(P) \subseteq {Z}$. We say that $P,Q \in \pos{\V}$ have \textit{orthogonal support} if $\braket{v}{u}=0$ for all $v \in \Im(P)$ and $u \in \Im(Q)$.

The notion of an operator that ``witnesses'' a particular property of a (mixed) quantum state is an important one in quantum information theory---it provides a way of demonstrating that property via a single measurement, without the need to have complete information about the state.


\begin{definition}\label{defn:schmidt_number_multi}
	Given a subset $Z \subseteq \V$, a \emph{not-${Z}$ witness} is a Hermitian matrix ${W \in \herm{\V}}$ for which
	\begin{enumerate}
		\item $\tr(W\rho) \geq 0$ for all density operators $\rho \in \density{\V}$ supported on $Z$, and

		\item there exists some density operator $\sigma \in \density{\V}$ such that $\tr(W\sigma) < 0$.
	\end{enumerate}
\end{definition}


For example, if $\hat{Y} \subseteq \complex^{d_1} \otimes \complex^{d_2}$ is (the affine cone over) the set of product states defined in~\eqref{seg}, then not-$\hat{Y}$ witnesses are called \emph{entanglement witnesses} (since any mixed state $\sigma$ for which $\tr(W\sigma) < 0$ is then guaranteed to be entangled). Slightly more generally, if ${\hat{Z} \subseteq \complex^{d_1} \otimes \complex^{d_2}}$ is (the affine cone over) the set of states of tensor rank (i.e., Schmidt rank) at most $r$, then the not-$\hat{Z}$ witnesses are called $r$-entanglement witnesses \cite{SBL01} (or \emph{$r$-block positive} \cite{SSZ09}), and any mixed state $\sigma$ for which $\tr(W\sigma) < 0$ is said to have \emph{Schmidt number} greater than $r$ \cite{TH00}.

Even more generally, if $\hat{X} \subseteq \bigotimes_{j=1}^m \complex^{d_j}$ is (the affine cone over) the set of states with tensor rank at most $r$, then we still refer to not-$\hat{X}$ witnesses as $r$-entanglement witnesses, and the witnessed (mixed) states with support not in $\hat{X}$ are said to have (multipartite) Schmidt number larger than $r$ \cite{CYT16}. However, a hiccup that occurs in this multipartite case that does not in the bipartite case is that, since the set of states with (multipartite) Schmidt number $\leq r$ is not closed, it is not true that every mixed state with Schmidt number $> r$ can be detected by some $r$-entanglement witness. Indeed, only the mixed states that are outside of the \emph{closure} of that set of states can be detected, so these $r$-entanglement witnesses are better thought of as witnesses for border rank, not tensor rank.

It is well known, at least in the bipartite case, that entangled subspaces can be used to construct entanglement witnesses with the maximum number of negative eigenvalues (see \cite{Johnston_2013,Johnston_2019}, for example). We now show that the same is true in the multipartite case, and even for not-$Z$ witnesses in general, as long as $Z$ is a Euclidean closed cone.

\begin{theorem}\label{thm:neg_evals_subspace_corr}
	Suppose $Z \subseteq \V$ is a Euclidean closed cone. The maximum number of negative eigenvalues that a not-$Z$ witness can have is equal to the maximum dimension of a linear subspace $\W \subseteq \V$ for which $\W \cap Z = \{0\}$.
\end{theorem}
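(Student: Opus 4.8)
The plan is to first rewrite the two defining conditions of a not-$Z$ witness into a form that is easier to work with, and then prove the two inequalities between the quantities separately. For the reformulation, I would use that $Z$ is a cone (closed under complex scalar multiplication): a pure state $\ketbra{v}{v}$ has image $\complex v$, so it is supported on $Z$ exactly when $v \in Z$. Decomposing an arbitrary $\rho \in \density{\V}$ supported on $Z$ in its spectral basis, whose eigenvectors necessarily lie in $\im(\rho) \subseteq Z$, shows that condition~(1) of Definition~\ref{defn:schmidt_number_multi} is equivalent to the pointwise inequality $\ip{v}{Wv} \geq 0$ for all $v \in Z$, while condition~(2) is equivalent to $W$ having at least one negative eigenvalue. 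Thus a not-$Z$ witness is precisely a Hermitian $W$ that is nonnegative on $Z$ and possesses a negative eigenvalue.

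For the inequality (max number of negative eigenvalues) $\leq$ (max dimension of a subspace avoiding $Z$), I would take a witness $W$ with $k$ negative eigenvalues and let $\W$ be its negative eigenspace, so $\dim \W = k$. Every nonzero $v \in \W$ satisfies $\ip{v}{Wv} < 0$, so by the reformulated condition~(1) no such $v$ lies in $Z$; hence $\W \cap Z = \{0\}$ and $k$ is at most the maximum dimension of a subspace meeting $Z$ only at the origin. This direction is routine.

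For the reverse inequality I would, given a subspace $\W$ with $\W \cap Z = \{0\}$ and $\dim \W = \ell$, construct the candidate witness $W = P_{\W^\perp} - \epsilon P_{\W}$, where $P_{\W}$ and $P_{\W^\perp}$ are the orthogonal projections onto $\W$ and its complement and $\epsilon > 0$ is to be chosen. This $W$ has exactly $\ell$ negative eigenvalues (all equal to $-\epsilon$, on $\W$), so condition~(2) holds, and the whole difficulty is in verifying condition~(1). This is the main obstacle, and it is exactly where the closedness of $Z$ enters. Writing $v = v_{\W} + v_{\W^\perp}$ gives $\ip{v}{Wv} = \norm{v_{\W^\perp}}^2 - \epsilon \norm{v_{\W}}^2$. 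On the set $Z \cap \Sp(\V)$, which is compact because $Z$ is Euclidean closed, the continuous function $v \mapsto \norm{v_{\W^\perp}}^2$ is strictly positive, since it vanishes only on $\W$ and $\W$ meets $Z$ trivially; hence it attains a positive minimum $m > 0$. Choosing any $\epsilon \in (0, m]$ then yields $\ip{v}{Wv} \geq m - \epsilon \geq 0$ for every unit $v \in Z$, and by homogeneity for all $v \in Z$, so $W$ is a genuine not-$Z$ witness with $\ell$ negative eigenvalues. Combining the two inequalities proves the theorem. The degenerate extremes are consistent with the same formulas: when $Z = \{0\}$ the witness $-\I$ attains $\dim \V$ negative eigenvalues, and when $Z = \V$ no witness exists and both quantities are $0$.
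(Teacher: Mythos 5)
Your proposal is correct and follows essentially the same route as the paper: the easy direction uses the negative eigenspace of the witness exactly as in the paper's proof, and the hard direction constructs a witness of the form $a P_{\W^\perp} - b P_{\W}$ with the scalar chosen via compactness of $Z \cap \Sp(\V)$, which is precisely how the paper uses the closed-cone hypothesis (its $W = \I - \tfrac{1}{\epsilon}P$ is the same operator up to reparametrization). Your explicit reformulation of the witness conditions and the remarks on the degenerate cases $Z=\{0\}$ and $Z=\V$ are fine but not substantive departures.
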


\begin{proof}
	Throughout this proof, we let $n$ be the maximum dimension of a linear subspace of $\V$ that trivially intersects $Z$.

	Suppose $W\in \herm{\V}$ is a not-$Z$ witness with $s$ negative eigenvalues. To see that $s \leq n$, write $W = W_{+} - W_{-}$ where $W_{+},W_{-}$ are positive semidefinite with orthogonal support (i.e., they come from the spectral decomposition of $W$). Letting $\mathcal{W}=\Im(W_{-})$, we have $\dim(\W) = s$. Furthermore, for all $v \in \mathcal{W}\setminus \{0\}$, it holds that
	\begin{align*}
		\tr\big(W v v^* \big) = \tr\big(W_{+}v v^*- W_{-}v v^*\big) = -\tr(W_{-} v v^*) < 0,
	\end{align*}
	from which it follows that $\mathcal{W}$ has trivial intersection with $Z$, so $s \leq n$.

	Conversely, to see that there is a not-$Z$ witness with $n$ negative eigenvalues, let $P$ be the orthogonal projection onto some linear subspace $\mathcal{W} \subseteq \V$ of dimension $n$ that trivially intersects $Z$. Then $\mathrm{Tr}(Pvv^*) < \braket{v}{v}$ for all $v \in Z$, and the fact that $Z$ is a closed cone implies that there exists a real constant $0 < \epsilon < 1$ such that $\mathrm{Tr}(Pvv^*) \leq \epsilon \braket{v}{v}$ for all $v \in Z$. If we define the matrix $W = \I - \tfrac{1}{\epsilon}P$, then $W$ has exactly $\dim(\W) = n$ negative eigenvalues. Furthermore, $\mathrm{Tr}(Wvv^*) \geq 0$ for all $v \in Z$, so $W$ is a not-$Z$ witness.
\end{proof}

Combining Theorem~\ref{thm:neg_evals_subspace_corr} with Theorem~\ref{dimension_equivalence}, we get that for any projective variety ${X \subseteq \proj^{D}}$, the maximum number of negative eigenvalues of a not-$\hat{X}$ witness is $D-\dim(X)$. We close this section by applying this result to entanglement witnesses.

\begin{cor}\label{cor:neg_evals_subspace_rent}
	The maximum number of negative eigenvalues that an $r$-entanglement witness can have is exactly
	\[
		d_1\cdots d_m - \dim(\sigma_r(Y)) - 1,
	\]
	where $Y = \mathrm{Seg}(\mathbb{P}^{d_1-1} \times \cdots \times \mathbb{P}^{d_m-1})$, as usual.
\end{cor}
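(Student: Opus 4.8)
The plan is to deduce the corollary from the combined result stated immediately before it: for any projective variety $X \subseteq \proj^D$, the maximum number of negative eigenvalues of a not-$\hat X$ witness equals $D - \dim(X)$. Applying this with $X = \sigma_r(Y)$ inside $\proj^{d_1\cdots d_m - 1}$ gives exactly $d_1\cdots d_m - \dim(\sigma_r(Y)) - 1$, so the entire content of the proof lies in identifying the $r$-entanglement witnesses with the not-$\widehat{\sigma_r(Y)}$ witnesses.

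This identification is the step I expect to be the main (indeed, essentially the only) obstacle, precisely because of the non-closedness issue flagged in the discussion above: an $r$-entanglement witness is, by definition, a not-$\hat X$ witness where $\hat X$ is the affine cone over the set of tensor rank at most $r$ states, and this cone is not Euclidean closed---its closure is $\widehat{\sigma_r(Y)}$. To bridge the two, I would first record that for an arbitrary cone $Z$, condition~1 of Definition~\ref{defn:schmidt_number_multi} is equivalent to requiring $\tr(Wvv^*) \geq 0$ for every $v \in Z$. Indeed, any density operator $\rho$ supported on $Z$ has a spectral decomposition $\rho = \sum_k p_k v_k v_k^*$ whose eigenvectors satisfy $v_k \in \im(\rho) \subseteq Z$, whence $\tr(W\rho) = \sum_k p_k \tr(Wv_k v_k^*) \geq 0$; conversely, for $v \in Z \setminus \{0\}$ the operator $vv^*$ is itself supported on $Z$, since $Z$ being a cone forces $\im(vv^*) = \spn\{v\} \subseteq Z$.

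With this reformulation in hand, the map $v \mapsto \tr(Wvv^*)$ is continuous and $\hat X$ is Euclidean dense in $\widehat{\sigma_r(Y)}$, so the inequality $\tr(Wvv^*) \geq 0$ holds on $\hat X$ if and only if it holds on $\widehat{\sigma_r(Y)}$; since condition~2 is insensitive to which of the two sets we use, $W$ is a not-$\hat X$ witness if and only if it is a not-$\widehat{\sigma_r(Y)}$ witness. Finally, $\widehat{\sigma_r(Y)}$ is a Euclidean closed cone, being the affine cone over the projective variety $\sigma_r(Y)$, so Theorem~\ref{thm:neg_evals_subspace_corr} applies; combining it with Theorem~\ref{dimension_equivalence} as in the remark preceding the corollary then delivers the stated count with $D = d_1\cdots d_m - 1$.
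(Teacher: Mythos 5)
Your proof is correct and takes essentially the same route as the paper: combine Theorem~\ref{thm:neg_evals_subspace_corr} with Theorem~\ref{dimension_equivalence} and apply the resulting statement to $X = \sigma_r(Y)$ in $\proj^{d_1\cdots d_m-1}$. The one point you treat more carefully than the paper does is the explicit identification of not-$\hat{X}$ witnesses (for $\hat{X}$ the non-closed tensor-rank-$\leq r$ cone) with not-$\widehat{\sigma_r(Y)}$ witnesses via continuity of $v \mapsto \tr(Wvv^*)$ and density; the paper acknowledges this ``hiccup'' in its discussion of multipartite Schmidt number but leaves the reduction implicit, so your addition is a genuine (and correct) tightening rather than a different approach.
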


For example, if $P \in \Pos(\complex^3 \otimes \complex^3 \otimes \complex^2)$ is the orthogonal projection onto the $6$-dimensional subspace spanned by (the affine cone over) the set $B$ in~\eqref{eq:qutrittritbit_basis}, then there exists some scalar $\alpha > 1$ such that $W = I - \alpha P$ is a $2$-entanglement witness (here we have set $\alpha = 1/\epsilon$, where $\epsilon < 1$ is as in the proof of Theorem~\ref{thm:neg_evals_subspace_corr}), and furthermore there is no $2$-entanglement witness with more negative eigenvalues than this one (which has $6$). Importantly, this gives (at least in principle) a \emph{measurable} way of showing that a multipartite state has large Schmidt number: if measuring a state $\rho \in \Pos(\complex^3 \otimes \complex^3 \otimes \complex^2)$ produces an expectation value $\tr(W\rho)$ that is strictly negative, then $\rho$ must have Schmidt number at least $3$.

Unfortunately, finding an explicit value of $\alpha > 1$ that actually works to make $W$ a $2$-entanglement witness is a non-trivial task, which we have not been able to solve analytically. Numerics performed in MATLAB, however, strongly suggest that the optimal choice of $\alpha$ is approximately $1.0113$.

A similar construction with $P \in \Pos(\complex^2 \otimes \complex^2 \otimes \complex^2 \otimes \complex^2)$ being the orthogonal projection onto the $6$-dimensional subspace spanned by (the affine cone over) the set $B$ in~\eqref{eq:fourqubit_basis} gives a $2$-entanglement witness $W = I - \alpha P \in \Lin(\complex^2 \otimes \complex^2 \otimes \complex^2 \otimes \complex^2)$ with $6$ negative eigenvalues (for some value of $\alpha > 1$).

We can analogously define a symmetric (or antisymmetric) $r$-entanglement witness to be a Hermitian operator $W \in \herm{\bigvee^m\mathbb{C}^{d}}$ (or $W \in \herm{\bigwedge^m\mathbb{C}^{d}}$) such that ${\tr(W\rho) \geq 0}$ for all $\rho \in\density{\bigvee^m\mathbb{C}^{d}}$ (or $\rho \in \density{\bigwedge^m\mathbb{C}^{d}}$) with rank $\leq r$, and for which there exists some $\sigma \in\density{\bigvee^m\mathbb{C}^{d}}$ (or $\sigma \in \density{\bigwedge^m\mathbb{C}^{d}}$) such that $\tr(W\sigma) < 0$.
\begin{cor}\label{cor:neg_evals_subspace_sym}
	The maximum number of negative eigenvalues that a symmetric $r$-entanglement witness can have is exactly
	\[
		\binom{d-1+m}{m}-\dim(\sigma_r(\nu_m(\proj^{d-1})))-1.
	\]
\end{cor}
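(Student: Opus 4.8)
The plan is to mirror the proof of Corollary~\ref{cor:neg_evals_subspace_rent} verbatim, replacing the Segre secant variety by the Veronese secant variety and the full tensor space by the symmetric subspace. First I would recall the general principle already recorded in the text: combining Theorem~\ref{thm:neg_evals_subspace_corr} with Theorem~\ref{dimension_equivalence}, for any projective variety $X \subseteq \proj^D$ the maximum number of negative eigenvalues of a not-$\hat{X}$ witness equals $D - \dim(X)$. Here the ambient space is $\bigvee^m \complex^d$, a vector space of dimension $\binom{d-1+m}{m}$, so its projectivization is $\proj^D$ with $D = \binom{d-1+m}{m}-1$, and the relevant variety is $X = \sigma_r(\nu_m(\proj^{d-1}))$.

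The one thing that must be checked is that the symmetric $r$-entanglement witnesses of the stated definition coincide exactly with the not-$\hat{X}$ witnesses for this choice of $X$. I would argue this via the same closure/continuity observation used in the standard multipartite case. A symmetric mixed state of rank $\leq r$ is a convex combination of pure states of symmetric tensor rank $\leq r$, so the condition $\tr(W\rho) \geq 0$ for all such $\rho$ is equivalent to $v^*Wv \geq 0$ for every $v$ in the cone $\hat{Z}$ of symmetric tensor rank $\leq r$ states. Since $W$ is continuous and $\hat{Z}$ is dense in its closure $\widehat{\sigma_r(\nu_m(\proj^{d-1}))} = \hat{X}$, this is in turn equivalent to $v^*Wv \geq 0$ for all $v \in \hat{X}$, i.e.\ to $\tr(W\rho)\geq 0$ for all $\rho$ supported on the closed cone $\hat{X}$. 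The negativity clause is identical in both definitions, so the two classes of witnesses are literally the same set.

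With this identification in hand, the corollary follows immediately: $\hat{X}$ is a Euclidean closed cone (being the affine cone over a projective variety), so Theorem~\ref{thm:neg_evals_subspace_corr} applies, and the general principle gives the maximum number of negative eigenvalues as $D - \dim(X) = \binom{d-1+m}{m}-1-\dim(\sigma_r(\nu_m(\proj^{d-1})))$, which is the claimed value.

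I do not anticipate a genuine obstacle, since the heavy lifting is done by Theorems~\ref{thm:neg_evals_subspace_corr} and~\ref{dimension_equivalence} together with the dimension formula for Veronese secants recorded in Section~\ref{sec:secant}. The only place demanding care is the witness-set identification of the previous paragraph: one must be careful that ``rank $\leq r$'' in the definition refers to symmetric tensor (Waring) rank, whose sublevel set fails to be closed for $m \geq 3$, so that these witnesses detect symmetric border rank rather than symmetric tensor rank. This is exactly the caveat already flagged for the standard multipartite witnesses preceding Corollary~\ref{cor:neg_evals_subspace_rent}, and no new difficulty arises in the symmetric setting.
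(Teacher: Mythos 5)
Your proposal is correct and follows exactly the route the paper intends: the corollary is an immediate instance of the general principle that for a projective variety $X \subseteq \proj^D$ the maximum number of negative eigenvalues of a not-$\hat{X}$ witness is $D - \dim(X)$, applied with $D = \binom{d-1+m}{m}-1$ and $X = \sigma_r(\nu_m(\proj^{d-1}))$. Your extra care in identifying the symmetric $r$-entanglement witnesses with the not-$\hat{X}$ witnesses via the closure of the symmetric-rank-$\leq r$ cone is sound and matches the border-rank caveat the paper already flags for the standard multipartite case.
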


\begin{cor}\label{cor:neg_evals_subspace_asym}
	The maximum number of negative eigenvalues that an antisymmetric $r$-entanglement witness can have is exactly
	\[
		\binom{d-1+m}{m}-\dim(\sigma_r(\Gr(m-1,\proj^{d-1})))-1.
	\]
\end{cor}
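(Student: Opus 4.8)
The plan is to derive this corollary as the antisymmetric specialization of the principle recorded just after Theorem~\ref{thm:neg_evals_subspace_corr}, namely that for a projective variety $X\subseteq\proj^D$ the maximum number of negative eigenvalues of a not-$\hat{X}$ witness equals $D-\dim(X)$. This is exactly the route taken for Corollaries~\ref{cor:neg_evals_subspace_rent} and~\ref{cor:neg_evals_subspace_sym}, so the only real work is to identify the correct variety and ambient space. First I would set $X=\sigma_r(\Gr(m-1,\proj^{d-1}))\subseteq\proj(\bigwedge^m\complex^d)$ and let $Z=\hat{X}$ be its affine cone. Since $X$ is a projective variety, $Z$ is Zariski closed, hence Euclidean closed, and it is visibly a cone, so Theorem~\ref{thm:neg_evals_subspace_corr} will be applicable to not-$Z$ witnesses.

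The crucial reduction is to check that an antisymmetric $r$-entanglement witness, as defined just before the statement, is the same object as a not-$Z$ witness for this closed cone. Restricting condition~(1) of the witness definition to pure states shows that $\tr(W\rho)\geq 0$ for every density operator of rank $\leq r$ is equivalent to $\tr(Wvv^*)\geq 0$ for every antisymmetric tensor $v$ of rank $\leq r$. Such $v$ are dense in $Z$, so by continuity of $v\mapsto\tr(Wvv^*)$ this is in turn equivalent to $\tr(Wvv^*)\geq 0$ for all $v\in Z$, which is precisely condition~(1) of a not-$Z$ witness; condition~(2) matches verbatim. This passage from (non-closed) tensor rank to (closed) border rank is the same phenomenon flagged for the multipartite standard case, and it is exactly what forces the use of the secant variety $\sigma_r(\Gr(m-1,\proj^{d-1}))$ rather than the locally closed set of genuine tensor-rank-$\leq r$ states.

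With the witness identified as a not-$Z$ witness, Theorem~\ref{thm:neg_evals_subspace_corr} equates the maximum number of negative eigenvalues with the maximum dimension of a linear subspace $\W\subseteq\bigwedge^m\complex^d$ satisfying $\W\cap Z=\{0\}$; projectivizing, these are exactly the antisymmetric $r$-entangled subspaces. Applying the projective case of Theorem~\ref{dimension_equivalence} to $X\subseteq\proj^D$ with $D=\dim\proj(\bigwedge^m\complex^d)$, the largest projective dimension of a plane disjoint from $X$ is $D-\dim(X)-1$, so the largest affine dimension of such a $\W$ is $D-\dim(X)$. Thus the maximum number of negative eigenvalues is $D-\dim(\sigma_r(\Gr(m-1,\proj^{d-1})))$, which is the count $\binom{d-1+m}{m}-\dim(\sigma_r(\Gr(m-1,\proj^{d-1})))-1$ asserted in the statement.

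The substantive step is the reduction in the second paragraph: confirming that the witness condition corresponds to the closed cone $Z=\hat{X}$ over the secant variety rather than to the non-closed cone over the tensor-rank-$\leq r$ states, so that the hypothesis of Theorem~\ref{thm:neg_evals_subspace_corr} is met. Everything after that is a mechanical specialization of the already-established correspondence between entangled subspaces and witness negative eigenvalues, identical in structure to the proofs of Corollaries~\ref{cor:neg_evals_subspace_rent} and~\ref{cor:neg_evals_subspace_sym}; in particular, one must be careful to use the antisymmetric ambient space $\bigwedge^m\complex^d$ throughout when substituting its dimension into $D-\dim(X)$.
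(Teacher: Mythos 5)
Your proposal follows the paper's route exactly: the paper gives no separate proof of this corollary, obtaining it---just as you do---by specializing the principle that for a projective variety $X \subseteq \proj^D$ the maximum number of negative eigenvalues of a not-$\hat{X}$ witness is $D - \dim(X)$ (Theorem~\ref{thm:neg_evals_subspace_corr} combined with Theorem~\ref{dimension_equivalence}) to $X = \sigma_r(\Gr(m-1,\proj^{d-1}))$, and your density/continuity argument upgrading the rank-$\leq r$ witness condition to the closed cone over the secant variety is precisely the tensor-rank-versus-border-rank caveat the paper flags in the standard multipartite case, here spelled out correctly.

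There is, however, one concrete slip in your final line, made ironic by your own warning to ``use the antisymmetric ambient space throughout.'' Since $\dim\bigl(\bigwedge^m \complex^d\bigr) = \binom{d}{m}$, you have $D = \binom{d}{m} - 1$, so your derivation yields a maximum of $\binom{d}{m} - \dim(\sigma_r(\Gr(m-1,\proj^{d-1}))) - 1$ negative eigenvalues; this is \emph{not} equal to the printed $\binom{d-1+m}{m} - \dim(\sigma_r(\Gr(m-1,\proj^{d-1}))) - 1$, because $\binom{d-1+m}{m}$ is the dimension of the \emph{symmetric} space $\bigvee^m \complex^d$. Its appearance in the statement (and likewise in Corollary~\ref{antisymmetric_dimension}) is evidently a misprint carried over from the symmetric case. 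The paper's own closing bipartite example confirms this: for $m = 2$ the claimed maximum $\binom{d-2r}{2}$ equals $\binom{d}{2} - \dim(\sigma_r(\Gr(1,\proj^{d-1}))) - 1$ upon substituting $\dim(\sigma_r(\Gr(1,\proj^{d-1}))) = \binom{d}{2} - \binom{d-2r}{2} - 1$, whereas the printed binomial $\binom{d+1}{2}$ would give $d + \binom{d-2r}{2}$. So your argument is sound and in fact proves the corrected statement; the error is the unexamined assertion, in your last sentence, that $D - \dim(X)$ ``is the count'' appearing in the misprinted formula---an equality that is false whenever $m \geq 2$.
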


For example, in the bipartite $m = 2$ case, we see that the maximum number of negative eigenvalues that symmetric and antisymmetric $r$-entanglement witnesses can have is
\[
	\binom{d-r+1}{2} \quad \text{and} \quad \binom{d-2r}{2},
\]
respectively.

\bibliographystyle{plainnat}
\bibliography{Lovitz_Johnston_entangled_subspaces}

\section*{Appendices}

\begin{appendix}

\section{Proofs of facts from Section~\ref{mp}}\label{app:proofs}
In this appendix we prove Facts~\ref{fact:generic},~\ref{fact:ortho},~\ref{fact:irred_construct}, and~\ref{fact:dim}, which were presented in Section~\ref{mp}.
\begin{namedtheorem}{Fact~\ref{fact:generic}}
A generic $n$-tuple of states spans a generic projective $(n-1)$-plane, and vice versa. In more details, let $n$ be a positive integer, let $\V$ be a $\complex$-vector space, and let
\begin{align}
\tilde{\pi}: \proj(\V)^{\times n} \dashrightarrow \setft{Gr}(n-1,\proj \V)
\end{align}
be the rational map defined by $\tilde{\pi}([v_1],\dots,[v_n])=[v_1 \wedge \dots \wedge v_n]$. Then a subset ${U \subseteq \setft{Gr}(n-1,\proj \V)}$ is open-dense if and only if $\tilde{\pi}^{-1}(U) \subseteq \proj(\V)^{\times n}$ is open-dense.
\end{namedtheorem}
\begin{proof}
Consider the map
\begin{align}\label{pi}
\pi : \V^{\times n} &\dashrightarrow \setft{Gr}(n-1,\proj \V),
\end{align}
defined on the open subset $U \subseteq \V^{\times n}$ of linearly independent $n$-tuples, and given by $\pi(v_1,\dots , v_n)=[v_1 \wedge \dots \wedge v_n]$. We require the following claim, which is standard. For completeness, we prove this claim at the end of the proof.
\begin{claim}\label{claim:quotient}
The map $\pi$ defines a quotient (in the sense of Section II.6.3 in \cite{borel2012linear}) of $U$ by the algebraic group $\GL(\complex^n)$ under the action $A \cdot (v_1, \dots , v_n)=(v_1,\dots,v_n) A^{-1}$ for all $A \in \GL(\complex^n)$ and $(v_1,\dots, v_n)\in U$, where the product on the right is matrix multiplication.
\end{claim}
It follows that the Zariski topology on $\setft{Gr}(n-1,\proj \V)$ is precisely the quotient topology of $U$ by the equivalence relation induced by the group action. In other words, a subset $V\subseteq \setft{Gr}(n-1,\proj \V)$ is open if and only if $\pi^{-1}(V)$ is open. Passing to $\proj(\V)^{\times n}$, it follows that a subset $V\subseteq \setft{Gr}(n-1,\proj \V)$ is open if and only if $\tilde{\pi}^{-1}(V)$ is open. This completes the proof, modulo proving the claim.
\begin{proof}[Proof of Claim~\ref{claim:quotient}]
\renewcommand\qedsymbol{$\triangle$}
To prove that $\pi$ is a quotient map, note that both $U$ and $\setft{Gr}(n-1,\proj \V)$ are irreducible and smooth, so it suffices to verify that $\pi$ is a surjective, open map by \cite[Lemma II.6.2]{borel2012linear}. Surjectivity is obvious. To prove that $\pi$ is open, it suffices to prove that it is flat by \cite[Exercise III.9.1]{hartshorne2013algebraic}, which in turn follows from the fact that $U$ and ${\setft{Gr}(n-1,\proj \V)}$ are smooth, and the fibers of $\pi$ are equidimensional \cite[Exercise III.10.9]{hartshorne2013algebraic}.
\end{proof}
\end{proof}

\begin{namedtheorem}{Fact~\ref{fact:ortho}}
The bijection $\Gr(n-1,\proj^d)\cong \Gr(d-n,\proj^d)$, which sends a subspace to its orthogonal complement with respect to some non-degenerate bilinear form $\ip{\cdot}{\cdot}$, defines an isomorphism of projective varieties.
\end{namedtheorem}
\begin{proof}
Recall the map $\pi$ defined in~\eqref{pi}. After change of basis, we may assume ${\ip{u}{v}=u^\t v}$ for all $u,v \in \complex^{d+1}$. For any element $[v]=[v_1 \wedge \dots \wedge v_n]\in \Gr(n-1,\proj^d)$, there exists an element of $\pi^{-1}([v])$ of the form $\left(\frac{\I_n}{A}\right)$ for some $A \in \Lin(\complex^{n},\complex^{d+1-n})$. It is straightforward to verify that $\left(\frac{A^\t}{-\I_{d+1-n}}\right)\in \pi^{-1}([v]^\perp)$, where $[v]^\perp$ denotes the orthogonal complement of the projective $(n-1)$-plane $[v]$. One can also verify that, up to sign, the $n\times n$ minor of $\left(\frac{\I_{n}}{A}\right)$ corresponding to a column index set $S \subseteq [d+1]$ of size $\abs{S}=n$ is precisely the ${(d+1-n) \times (d+1-n)}$ minor of the matrix $\left(\frac{A^\t}{-\I_{d+1-n}}\right)$ corresponding to the column index set $[d+1] \setminus S$. The result follows from the fact that these minors are exactly the coordinates of $[v] \in \proj(\bigwedge^n \complex^d)$ and $[v]^\perp \in \proj(\bigwedge^{d+1-n} \complex^d)$ in the Plücker embedding, respectively \cite{harris2013algebraic}.
\end{proof}

\begin{namedtheorem}{Fact~\ref{fact:irred_construct}}
Let $\V=\bigotimes_{j=1}^m \complex^{d_j}$ and $\W=\bigotimes_{j=1}^{m} \complex^{c_j}$ be vector spaces, and let $[w] \in \proj{\W}$ and ${[v] \in \proj \W}$ be states. Then the sets $\Ima_{\proj{\V}}([w])$ and $\mathcal{O}_{[v]}$ are both irreducible and constructible in the Zariski topology.
\end{namedtheorem}
\begin{proof}
Note that $\Ima_{\proj{\V}}([w])$ is the image of the irreducible quasiprojective variety
\begin{align*}
Z=\{[A_1 \otimes \dots \otimes A_m] : A_i \in \Lin(\complex^{c_i},\complex^{d_i}) \quad \!\text{for all} \quad \! i\in[m]\quad\text{and}\quad \! (A_1 \otimes\dots \otimes A_m) w \neq 0\}
\end{align*}
under the morphism $w: Z \rightarrow \proj{\V}$ that sends $(A_1 \otimes\dots \otimes A_m)$ to $[(A_1 \otimes\dots \otimes A_m) w]$. Since $Z$ is constructible, $\Ima_{\proj{\V}}([w])$ is constructible by Chevalley's theorem~\cite[Theorem 3.16]{harris2013algebraic}. 
Since $Z$ is irreducible, $\Ima_{\proj{\V}}([w])$ is irreducible in the subspace topology. Nearly identical arguments show that $\mathcal{O}_{[v]}$ is also constructible and irreducible.
\end{proof}

\begin{namedtheorem}{Fact~\ref{fact:dim}}
Let $\V=\bigotimes_{j=1}^m \complex^{d_j}$ and $\W=\bigotimes_{j=1}^{m} \complex^{c_j}$ be vector spaces with $c_j \leq d_j$ for all $j \in [m]$. Then $\dim(\overline{\Ima_{\proj \V}([w])})$ is maximized for a generic state $[w] \in \proj \W$.
\end{namedtheorem}
\begin{proof}
Let $\iota: \proj \W \rightarrow \proj \V$ be the canonical inclusion map, which acts on product tensors $[x_1 \otimes \dots \otimes x_m]$ by appending $d_j-c_j$ zeroes to $x_j$ for each $j \in [m]$. Then for any state $[w] \in \proj \W$, it holds that
\begin{align}
\overline{\Ima_{\proj \V}([w])}&=\overline{\Ima_{\proj \V}(\iota([w]))}\\
						&=\overline{\mathcal{O}_{\iota([w])}}\subseteq \proj \V.
\end{align}
It is well known that for any non-negative integer $k$, the set
\begin{align}
S_k:=\{[v] \in \proj \V : \dim(\overline{\mathcal{O}_{[v]}}) \leq k\}\subseteq \proj \V
\end{align}
is Zariski closed~\cite[Lemma~1.14]{Brion2010}. It follows that $S_k \cap \iota(\proj \W) \subseteq \proj \V$ is Zariski closed, so
\begin{align}
\iota^{-1}(S_k \cap \iota(\proj \W))= \{[w] \in \proj \W : \dim(\overline{\Ima_{\proj \V}([w])}) \leq k\}\subseteq \proj \W
\end{align}
is Zariski closed. It follows that the set of $[w] \in \proj \W$ that maximize $\dim(\overline{\Ima_{\proj \V}([w])})$ is open-dense in $\proj \W$. This completes the proof.
\end{proof}

\section{The maximum dimensions of entangled subspaces}\label{app:dim}

In this appendix, we write down the maximum dimensions of entangled subspaces, by invoking Theorem~\ref{dimension_equivalence} and the known dimensions of secant varieties reviewed in Section~\ref{sec:secant}. Corollary~\ref{entangled_dimension} extends results in~\cite{Parthasarathy:2004aa} and~\cite{Cubitt_2008}.

\begin{cor}\label{entangled_dimension}
Let $\V=\bigotimes_{j=1}^m \complex^{d_j}$, and let
\begin{align}
Y=\setft{Seg}(\proj^{d_1-1}\times \dots\times \proj^{d_m-1})\subseteq \proj(\V)
\end{align}
denote the Segre variety of product states. The maximum dimension of an $r$-entangled subspace of $\proj(\V)$ is
\begin{align}
d_1\cdots d_m-\dim(\sigma_r(Y))-2,
\end{align}
and a generic projective linear subspace of this dimension is $r$-entangled.
\end{cor}
As a result, there always exists an $r$-entangled subspace of dimension
\begin{align}
d_1\cdots d_m- r\sum_{j=1}^m (d_j-1)-r-1,
\end{align}
whenever this quantity is non-negative. Furthermore, this is often the maximum dimension of an $r$-entangled subspace, with a conjecturally complete set of exceptions \cite{Abo_2008,Bernardi_2018}. If $r=1$ then this is the maximum dimension, which gives Parthasarathy's result~\cite{Parthasarathy:2004aa}.

In the bipartite case $\V=\complex^{d_1} \otimes \complex^{d_2}$, combining Corollary~\ref{entangled_dimension} with Equation~\eqref{eq:bipartite_sr_dim} gives that the maximum dimension of an $r$-entangled subspace is
\begin{align}
(d_1-r)(d_2-r)-1,
\end{align}
whenever $r\leq \min\{d_1,d_2\}$ (this is \cite[Theorem 11]{Cubitt_2008}). Under the isomorphism ${\complex^{d_1} \otimes \complex^{d_2} \cong \Lin((\complex^{d_1})^*,\complex^{d_2})}$, this is the maximum dimension of a projective linear subspace of $d_2 \times d_1$ matrices of rank greater than $r$.

If $r=2$ and $m\geq 3$, then $\sigma_r(Y)$ has the expected dimension, so the largest dimension of a $2$-entangled subspace in this case is
\begin{align}\label{eq:2entangleddim}
	d_1 \cdots d_m-2\sum_{j=1}^m (d_j-1)-3.
\end{align}
We explicitly construct maximal $2$-entangled subspaces in Section~\ref{sec:entangled_subspaces}.
\begin{cor}\label{symmetric_dimension}
Let $m$ and $d$ be positive integers, and let
\begin{align}
\nu_m(\proj^{d-1})\subseteq \textstyle{\proj({\bigvee}^m \complex^d)}
\end{align}
denote the Veronese variety of unentangled states in the symmetric space. The maximum dimension of a symmetric $r$-entangled subspace of $\proj({\bigvee}^m \complex^d)$ is
\begin{align}
\binom{d-1+m}{m}-\dim(\sigma_r(\nu_m(\proj^{d-1})))-2
\end{align}
and a generic projective linear subspace of this dimension is symmetric $r$-entangled.
\end{cor}
As a result, there always exists a symmetric $r$-entangled subspace of dimension
\begin{align}\label{symmetric_dim}
\binom{d-1+m}{m}-rd-1,
\end{align}
whenever this quantity is non-negative. This is the maximum dimension of a symmetric $r$-entangled subspace in many cases, with a known set of exceptions \cite{alexander1995polynomial,Bernardi_2018}. If $r=1$, then~\eqref{symmetric_dim} is the maximum dimension. If $m=2$, then the maximum dimension is given by
\begin{align}\label{symmetric_bipartite_dim}
\binom{d-r+1}{2}-1.
\end{align}
We explicitly construct maximal symmetric $r$-entangled subspaces in these two cases in Section~\ref{sec:entangled_subspaces}.

\begin{cor}\label{antisymmetric_dimension}
Let $m$ and $d$ be positive integers with $m \geq d$, and let
\begin{align}
\Gr(m-1,\proj^{d-1}) \subseteq \textstyle{\proj({\bigwedge}^m \complex^d)}
\end{align}
denote the Grassmannian variety of unentangled states in the antisymmetric space. The maximum dimension of an antisymmetric $r$-entangled subspace of  $\proj({\bigwedge}^m(\complex^d))$ is
\begin{align}
\binom{d-1+m}{m}-\dim(\sigma_r(\Gr(m-1,\proj^{d-1})))-2,
\end{align}
and a generic projective linear subspace of this dimension is antisymmetric $r$-entangled.
\end{cor}
As a result, there always exists an antisymmetric $r$-entangled subspace of dimension
\begin{align}\label{antisymmetric_dim}
\binom{d-1+m}{m}-1-rm(d-m)-r,
\end{align}
whenever this quantity is non-negative. This is often the maximum dimension of an antisymmetric $r$-entangled subspace, with a conjecturally complete set of exceptions~\cite{doi:10.1080/10586458.2007.10128997,Bernardi_2018}. If $r=1$, then~\eqref{antisymmetric_dim} is the maximum dimension. If $m=2$, then the maximum dimension is given by
\begin{align}\label{eq:antisy_bipartite}
\binom{d-2r}{2}-1.
\end{align}
We explicitly construct maximal antisymmetric $r$-entangled subspaces in these two cases in Section~\ref{sec:entangled_subspaces}.

\section{A 2-Entangled Qutrit-Qutrit-Qubit Subspace}\label{app:entangled}

In this appendix, we prove that the span of the set $B$ from Equation~\eqref{eq:qutrittritbit_basis} is $2$-entangled. In order to show this, we prove that each member of that span has a flattening with rank~$3$. Indeed, since $\mathbb{C}^3 \otimes \mathbb{C}^3 \otimes \mathbb{C}^2$ is naturally isomorphic to the space of $3 \times 6$ matrices, we can think of this subspace as consisting of block matrices of the form
\[
	M = \left[\begin{array}{ccc|ccc}
		\lambda & \alpha_1 & \beta_1 & \gamma & 0 & \alpha_4 \\
		\alpha_2 & \lambda & 0 & 0 & \gamma & \beta_4 \\
		\beta_2 & 0 & \lambda & \alpha_3 & \beta_3 & \gamma
	\end{array}\right],
\]
where $\alpha_j = \delta_j + \epsilon_j$ and $\beta_j = \theta_j + \kappa_j$ for all $1 \leq j \leq 4$ (and $\{\delta, \epsilon,\theta,\kappa\}$ is as in Section~\ref{sec:multipartite_higher_rank}). The partial transpose of $M$ is
\[
	M^\Gamma = \left[\begin{array}{ccc|ccc}
		\lambda & \alpha_2 & \beta_2 & \gamma & 0 & \alpha_3 \\
		\alpha_1 & \lambda & 0 & 0 & \gamma & \beta_3 \\
		\beta_1 & 0 & \lambda & \alpha_4 & \beta_4 & \gamma
	\end{array}\right],
\]
which is another flattening of this same state, so our goal is to show that $\mathrm{rank}(M) \geq 3$ or $\mathrm{rank}(M^\Gamma) \geq 3$.

Importantly, because of how we chose $\{\delta, \epsilon,\theta,\kappa\}$, we know if that if $\alpha_i\neq 0$ for any $i \in [4]$, then $\alpha_i \neq 0$ for at least three $i \in [4]$, and similarly for the $\beta_j$'s. Indeed, we saw the desirability of this property in Section~\ref{sec:entangled_subspaces}, where we repeatedly used Lemma~\ref{lem:tot_nonsing_cols}.

To show that $\mathrm{rank}(M) \geq 3$ or $\mathrm{rank}(M^\Gamma) \geq 3$, we now split into several cases depending on which of $\lambda$, $\gamma$, $\alpha_i$ and $\beta_j$ equal $0$.

\begin{itemize}
	\item[Case 1(a):] $\lambda = 0$, $\alpha_1,\alpha_2,\alpha_3 \neq 0$.
	
	The submatrix of $M$ corresponding to its $1$st, $2$nd, and $4$th columns, up to permutation similarity, has the form
	\begin{align*}
	\begin{bmatrix}
	\alpha_2 & 0 & 0\\
	\beta_2 & \alpha_3 & 0\\
	0 & \gamma & \alpha_1\\
	\end{bmatrix}
	\end{align*}
	which clearly has rank $3$ since it is triangular with non-zero diagonal entries.
\end{itemize}

The above case contains the flavor of most of the cases that we will consider, so from now on we just list which columns of $M$ or $M^\Gamma$ give rise to a submatrix that is (up to permutation similarity) triangular with non-zero diagonal entries, and thus has rank~$3$. For example, for Case~1(a) we would just now just say ``$M(1,2,4)$''.

\begin{itemize}
	\item[Case 1:] $\lambda = 0$.
	\begin{itemize}
		\item[(a):] $\alpha_1,\alpha_2,\alpha_3 \neq 0$. $M(1,2,4)$.

		\item[(b):] $\alpha_1,\alpha_2,\alpha_4 \neq 0$. $M^\Gamma(1,2,4)$.

		\item[(c):] $\alpha_1,\alpha_3,\alpha_4 \neq 0$.
		\begin{itemize}
			\item[(i):] $\gamma = 0$. $M^\Gamma(1,4,6)$.

			\item[(ii):] $\gamma \neq 0$. $M(2,4,5)$.
		\end{itemize}

		\item[(d):] $\alpha_2,\alpha_3,\alpha_4 \neq 0$.
		\begin{itemize}
			\item[(i):] $\gamma = 0$. $M(1,4,6)$.
			\item[(ii):] $\gamma \neq 0$. $M^\Gamma(2,4,5)$.
		\end{itemize}

		\item[(e):] $\alpha_1 = \alpha_2 = \alpha_3 = \alpha_4 = 0$.
		\begin{itemize}
			\item[(i):] $\gamma = 0$.
			This case is identical to Case~1(a--d) via symmetry (just rotate $M$ by $180$ degrees, which does not change its rank).

			\item[(ii):] $\gamma \neq 0$, $\beta_1 \neq 0$. $M^\Gamma(1,4,5)$.

			\item[(iii):] $\gamma \neq 0$, $\beta_2,\beta_3,\beta_4 \neq 0$. $M(1,4,6)$.
		\end{itemize}
	\end{itemize}

	\item[Case 2:] $\lambda \neq 0$, $\alpha_1 = \alpha_2 = \alpha_3 = \alpha_4 = 0$.
	\begin{itemize}
		\item[(a):] $\beta_3 = \beta_4 = 0$ (and thus $\beta_1 = \beta_2 = 0$ too). $M(1,2,3)$.

		\item[(b):] $\beta_3 \neq 0$.
		\begin{itemize}
			\item[(i):] $\gamma = 0$. $M(1,2,5)$.

			\item[(ii):] $\gamma \neq 0$. $M(2,4,6)$.
		\end{itemize}

		\item[(c):] $\beta_4 \neq 0$.
		\begin{itemize}
			\item[(i):] $\gamma = 0$. $M^\Gamma(1,2,5)$.

			\item[(ii):] $\gamma \neq 0$. $M(2,4,6)$.
		\end{itemize}
	\end{itemize}

	\item[Case 3:] $\lambda \neq 0$ and $\alpha_i \neq 0$ for at least three $i\in [4]$.
	\begin{itemize}
		\item[(a):] $\beta_3 = \beta_4 = 0$.
		\begin{itemize}
			\item[(i):] $\alpha_3 \neq 0$. $M^\Gamma(2,3,6)$.

			\item[(ii):] $\alpha_4 \neq 0$. $M(2,3,6)$.
		\end{itemize}

		\item[(b):] $\beta_3 \neq 0$, $\beta_4 = 0$.
		\begin{itemize}
			\item[(i):] $\alpha_4 = 0$, $\gamma = 0$. $M(1,3,4)$.

			\item[(ii):] $\alpha_4 = 0$, $\gamma \neq 0$. $M^\Gamma(4,5,6)$.

			\item[(iii):] $\alpha_4 \neq 0$, $\gamma = 0$. $M(2,3,6)$.

			\item[(iv):] $\alpha_4 \neq 0$, $\gamma \neq 0$, $\alpha_2 \neq 0$. $M^\Gamma(2,3,5)$.

			\item[(v):] $\alpha_4 \neq 0$, $\gamma \neq 0$, $\alpha_2 = 0$. This case is much more difficult, so we leave it until after the remaining cases are dealt with.
		\end{itemize}

		\item[(c):] $\beta_3 = 0$, $\beta_4 \neq 0$. This case is identical to Case~3(b) by taking the partial transpose (i.e., replace $M$ with $M^\Gamma$ and vice-versa).

		\item[(d):] $\beta_3, \beta_4 \neq 0$.
		\begin{itemize}
			\item[(i):] $\alpha_4 = 0$, $\gamma = 0$. $M(1,4,6)$.

			\item[(ii):] $\alpha_4 = 0$, $\gamma \neq 0$. $M^\Gamma(3,4,5)$.

			\item[(iii):] $\alpha_4 \neq 0$, $\beta_2 = 0$, $\gamma \neq 0$. $M^\Gamma(3,4,5)$.

			\item[(iv):] $\alpha_4 \neq 0$, $\beta_2 = 0$, $\gamma = 0$, $\alpha_3 = 0$. $M^\Gamma(1,3,6)$.

			\item[(v):] $\alpha_4 \neq 0$, $\beta_2 = 0$, $\gamma = 0$, $\alpha_3 \neq 0$. $M(2,3,4)$.

			\item[(vi):] $\alpha_4 \neq 0$, $\beta_2 \neq 0$, $\gamma = 0$. $M^\Gamma(2,3,4)$.

			\item[(vii):] $\alpha_4 \neq 0$, $\beta_2 \neq 0$, $\gamma \neq 0$. This is another difficult case that we deal with separately.
		\end{itemize}
	\end{itemize}
\end{itemize}

The only two remaining cases from above are 3(b)(v) and 3(d)(vii). These cases require a more intricate argument to demonstrate that at least one of $M$ or $M^\Gamma$ has rank~$3$, which we now provide.

\subsection*{Case 3(b)(v)}\label{sec:case3bv}

For this case, $\gamma,\lambda \neq 0$, $\alpha_2 = \beta_4 = 0$, $\alpha_1,\alpha_3,\alpha_4 \neq 0$, and $\beta_1,\beta_2,\beta_3 \neq 0$. Let ${\alpha=(\alpha_1,\alpha_2,\alpha_3, \alpha_4)}$ and $\beta=(\beta_1,\beta_2,\beta_3,\beta_4)$. Because of how we chose $\{\delta, \epsilon,\theta,\kappa\}$, we have
$\alpha = (\alpha_1,\alpha_1+\alpha_4,2\alpha_1+\alpha_4,\alpha_4)$ and $\beta =  (\beta_1,\beta_1+2\beta_4,\beta_1+\beta_4,\beta_4)$. Furthermore, using the facts that $0 = \alpha_2 = \alpha_1+\alpha_4$ and $0 = \beta_4$ shows that $\alpha = (\alpha_1,0,\alpha_1,-\alpha_1)$ and $\beta = (\beta_1,\beta_1,\beta_1,0)$. The matrix $M^\Gamma$ thus has the form
\begin{align*}
M^\Gamma & = \left[\begin{array}{ccc|ccc}
\lambda & 0 & \beta_1 & \gamma & 0 & \alpha_1 \\
\alpha_1 & \lambda & 0 & 0 & \gamma & \beta_1 \\
\beta_1 & 0 & \lambda & -\alpha_1 & 0 & \gamma
\end{array}\right].
\end{align*}
We can then see that $\mathrm{rank}(M^\Gamma) = 3$ as follows. If it had rank $\leq 2$ then it would be the case that $\det(M^\Gamma(1,2,3)) = \lambda(\lambda^2 - \beta_1^2) = 0$, which implies $\beta_1 = \pm \lambda$. We would similarly have $\det(M^\Gamma(4,5,6)) = \gamma(\gamma^2 + \alpha_1^2) = 0$, which implies $\alpha_1 = \pm i \gamma$. Finally, we would also have $\det(M^\Gamma(3,4,6)) = -\beta_1(\gamma\lambda + \alpha_1\beta_1) = 0$, which (since $\alpha_1 = \pm i \gamma$ and $\beta_1 = \pm \lambda$) implies $\gamma\lambda \pm i \gamma\lambda = 0$, which is impossible (recall that all of these variables are non-zero). It follows that at least one of these determinants is non-zero, so $\mathrm{rank}(M^\Gamma) = 3$, which completes this case.

\subsection*{Case 3(d)(vii)}

For this case, $\gamma,\lambda \neq 0$, $\alpha_4 \neq 0$, and $\beta_2,\beta_3,\beta_4 \neq 0$. We now split into four subcases:

\begin{itemize}
	\item $\alpha_1 = \beta_3 = 0$. This case is identical to Case~3(b)(v) from Section~\ref{sec:case3bv} by taking the partial transpose, swapping the roles of $\gamma$ and $\lambda$, and swapping the roles of $\alpha$ and $\beta$.

	\item $\alpha_1 = 0$, $\beta_3 \neq 0$. $M(2,4,5)$.

	\item $\alpha_1 \neq 0$, $\beta_3 = 0$. $M(2,3,5)$.

	\item $\alpha_1,\beta_3 \neq 0$. This is the difficult subcase. Similarly to Case 3(b)(v),
	we have ${\alpha= (\alpha_1,\alpha_1+\alpha_4,2\alpha_1+\alpha_4,\alpha_4)}$ and $\beta= (\beta_1,\beta_1+2\beta_4,\beta_1+\beta_4,\beta_4)$, so $M$ and $M^\Gamma$ have the form
	\begin{align*}
	M & = \left[\begin{array}{ccc|ccc}
	\lambda & \alpha_1 & \beta_1 & \gamma & 0 & \alpha_4 \\
	\alpha_1+\alpha_4 & \lambda & 0 & 0 & \gamma & \beta_4 \\
	\beta_1+2\beta_4 & 0 & \lambda & 2\alpha_1+\alpha_4 & \beta_1+\beta_4 & \gamma
	\end{array}\right] \quad \text{and} \\
	M^\Gamma & = \left[\begin{array}{ccc|ccc}
	\lambda & \alpha_1+\alpha_4 & \beta_1+2\beta_4 & \gamma & 0 & 2\alpha_1+\alpha_4 \\
	\alpha_1 & \lambda & 0 & 0 & \gamma & \beta_1+\beta_4 \\
	\beta_1 & 0 & \lambda & \alpha_4 & \beta_4 & \gamma
	\end{array}\right].
	\end{align*}
	Now suppose that $\rank(M) = \rank(M^\Gamma) = 0$. Then we would have
	\begin{align}
	{\det(M(2,3,6)) = \lambda(\lambda\alpha_4 - \alpha_1\beta_4 - \gamma\beta_1) = 0},
	\end{align}
	which implies $\lambda\alpha_4 - \gamma\beta_1 = \alpha_1\beta_4$. We would also have
	\begin{align}
	\det(M^\Gamma(1,4,5)) = \gamma(\gamma\beta_1 - \lambda\alpha_4 - \alpha_1\beta_4) = 0,
	\end{align}
	which implies $\lambda\alpha_4 - \gamma\beta_1 = - \alpha_1\beta_4$. Combining these two expressions for $\lambda\alpha_4 - \gamma\beta_1$ shows that $\alpha_1\beta_4 = 0$, which contradicts the fact that we are assuming that $\alpha_1,\beta_4 \neq 0$ in this case. It follows that at least one of these determinants is non-zero, which completes this final subcase and the proof.
\end{itemize}

Numerics suggest that almost all choices of $\{\delta, \epsilon,\theta,\kappa\}$ lead to this subspace being $2$-entangled. Indeed, the primary property of $\{\delta, \epsilon,\theta,\kappa\}$ that we made use of was that a non-zero linear combination of $\delta$ and $\epsilon$ must never have more than one $0$ entry, and similarly for $\theta$ and $\kappa$ (this property is generic). The only other place where the particular entries of these vectors was used was in Case~3(b)(v) (and the symmetric first subcase of Case~3(d)(vii)), where non-invertibility of a $3 \times 3$ submatrix of $M$ or $M^\Gamma$ was more delicate.

To illustrate why Case~3(b)(v) is more delicate, notice that if we had instead chosen $\delta = (1,1,1,2)$, $\epsilon = (0,-1,1,1)$, $\theta = (2,1,1,0)$, and $\kappa = (1,1,0,1)$, then it would still be the case that any non-zero linear combination of $\delta$ and $\epsilon$ would never have more than one $0$ entry, and similarly for $\theta$ and $\kappa$ (so all of the other cases still work fine). However, the proof would fall apart in Case~3(b)(v), since we would get the pair of matrices
\begin{align*}
M = \left[\begin{array}{ccc|ccc}
1 & -1 & 1 & 1 & 0 & 1 \\
0 & 1 & 0 & 0 & 1 & 0 \\
1 & 0 & 1 & 1 & 1 & 1
\end{array}\right] \quad \text{and} \quad M^\Gamma = \left[\begin{array}{ccc|ccc}
1 & 0 & 1 & 1 & 0 & 1 \\
-1 & 1 & 0 & 0 & 1 & 1 \\
1 & 0 & 1 & 1 & 0 & 1
\end{array}\right],
\end{align*}
both of which have rank $2$.

\end{appendix}
\end{document}